\numberwithin{equation}{section}
\theoremstyle{plain}
\newtheorem{definition}{Definition}
\newtheorem{lemma}[definition]{Lemma}
\newtheorem{theorem}[definition]{Theorem}
\newtheorem{remark}[definition]{Remark}
\theoremstyle{break}
\theoremstyle{nonumberplain}
\newtheorem{proof}{Proof}
\numberwithin{definition}{section}
\newcommand{\N}[1][]{\mathbb{N}^{#1}}
\newcommand{\Z}[1][]{\mathbb{Z}^{#1}}
\newcommand{\R}[1][]{\mathbb{R}^{#1}}
\newcommand{\vv}{\vec{v}}
\newcommand{\mattwo}[4]{\begin{bmatrix}{#1}&{#2}\\{#3}&{#4}\end{bmatrix}}
\newcommand{\vectwo}[2]{\left(\begin{array}{c}#1\\#2\end{array}\right)}
\newcommand{\ra}{\ensuremath{\rightarrow}}
\newcommand{\Ra}{\ensuremath{~\Rightarrow~}}
\newcommand{\Lra}{\ensuremath{~\Leftrightarrow~}}
\begin{document}

\title{Finding and Classifying Critical Points of 2D Vector
Fields: A Cell-Oriented Approach Using Group Theory}

\author{Felix~Effenberger \and
        Daniel~Weiskopf 
}

\date{April 2010}

\maketitle

\begin{abstract}
We present a novel approach to finding critical points in cell-wise
barycentrically or bilinearly interpolated vector fields on
surfaces.
The Poincar\'{e} index of the critical points is determined by investigating
the qualitative behavior of 0-level sets of the interpolants of the
vector field components in parameter space using precomputed combinatorial results, thus avoiding the computation of the Jacobian of the vector field at the critical points in order to determine its index. The locations of the critical points within a cell
are determined analytically to achieve accurate results. This
approach leads to a correct treatment of cases with two first-order
critical points or one second-order critical point of bilinearly
interpolated vector fields within one cell, which would be missed by
examining the linearized field only. We show that for the considered interpolation schemes determining the
index of a critical point can be seen as a coloring problem of cell
edges. A complete classification of all possible colorings
in terms of the types and number of critical points yielded by each
coloring is given using computational group theory. We
present an efficient algorithm that makes use of these precomputed
classifications in order to find and classify critical points in a
cell-by-cell fashion. Issues of numerical stability, construction of
the topological skeleton, topological simplification, and the
statistics of the different types of critical points are also
discussed.

Keywords: Vector field topology, interpolation, bary\-cen\-tric interpolation,
linear interpolation, bilinear interpolation, level sets,
higher-order singularities, computational group theory, colorings.
\end{abstract}

\section{Introduction}
\label{sec:intro}

The visualization of vector field topology is a problem
that arises naturally when studying the qualitative structure of
flows that are tangential to some surface. As usual, we use the term surface for a real, smooth 2-manifold (equipped with an atlas consisting of charts), see for example \cite{oneill83srg} for an introduction to Riemannian Geometry. Having its roots in the theory of dynamical systems, the topological skeleton of a Hamiltonian flow on a surface with isolated critical
points consists of these critical points and trajectories (streamlines) of the
vector field that lie at the boundary of a hyperbolic sector and connect two of the critical points. Helman and Hesselink~\cite{helman90srf} introduced the concept of the topology of a planar vector field to the visualization community and proposed the
following construction scheme: (1) critical points are located, (2)
classified, and then (3) trajectories along hyperbolic sectors are
traced and connected to their originating and terminating critical
points or boundary points. Step (2)---the classification of a
critical point---is usually based on the Jacobian of the vector
field, see~\cite{book-perko91}. Trajectories of step (3) are typically
constructed by solving an ordinary differential equation for
particle tracing.

Although a vast body of previous work in the field of flow
visualization focuses on the problem of how to extend the method of Helman and Hesselink to vector fields on arbitrary surfaces as well as the second and third step of the above algorithm, not much attention has been paid to the first step. In this paper, we specifically address the identification and
classification of critical points in parameter space.

As efficient computer-based visualization algorithms usually work with discrete parametrized versions of the surfaces involved --- examples of popular discretization schemes are for example triangulated or quadrangulated versions of the surface --- we will in this paper not focus on the well-researched field of how to parametrize a given surface (see \cite{floater05sps} for a recent survey) but assume that a surface always comes equipped with a globally continuous discrete parametrization that allows a cell-wise (local) barycentric or bilinear interpolation scheme of a vector field tangential to the surface in parameter space. 

While this task is rather easy
for linear vector fields, the problem setting becomes more
interesting for bilinearly interpolated fields. Bilinear
interpolation is ubiquitous in scientific visualization because it
is popular for widely used uniform or curvilinear grids
representing planar or curved surfaces. Since bilinear
interpolation is not linear, it can lead to higher-order critical
points, which are neglected by often-used linearization
approaches.

\begin{figure*}[t]
\centering
\begin{tabular}{@{}ccc@{}}
  \includegraphics[width=0.2\linewidth]{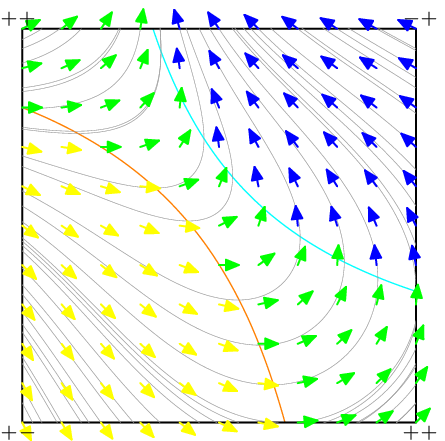} &
  \includegraphics[width=0.2\linewidth]{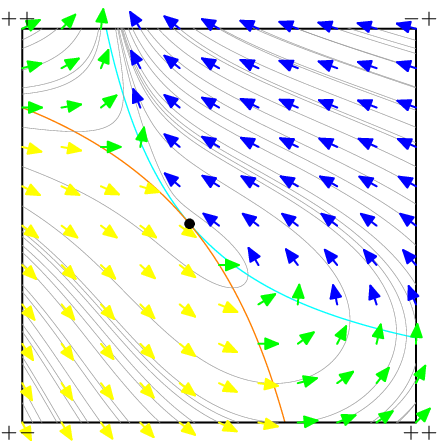} &
  \includegraphics[width=0.2\linewidth]{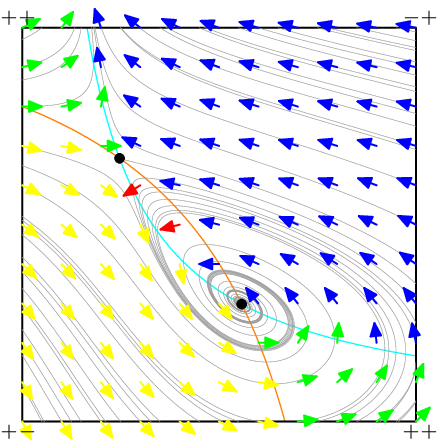}\\
  (a)&(b)&(c)\\
\end{tabular}
  \caption{
Classification of critical points for three intersection cases for
the $0$-level sets of the two vector field components $f_1$ (cyan)
and $f_2$ (orange) of a bilinearly interpolated vector field
$\vec{f}=(f_1,f_2)^T$: (a) no intersection of the level sets, (b)
touching of the level sets yielding one critical point, (c) double
intersection of the level sets yielding two critical points (one
saddle and one non-saddle). The colors of the vector arrows encode
the types of characteristic areas as defined in
Section~\ref{sec:characteristicareas} (green=$++$, yellow=$+-$,
blue=$-+$, red=$--$), a set of streamlines is shown in gray, and
critical points lying in the intersection set of the two level sets
are shown as black dots. Each vertex of a square is marked with
$++$, $+-$, $-+$, $--$ according to the sign of $f_1$ and $f_2$ at
that vertex. }
 \label{fig:intersectioncasesaabb}
\end{figure*}

In this paper, we introduce a new method that locates and classifies
all critical points within piecewise linearly (barycentrically) or
bilinearly interpolated two-dimensional grids.
Our method determines the index of
a critical point without the need to evaluate the Jacobian of the
vector field in the critical point in order to determine its Poincar\'{e} index. For the case of bilinearly interpolated vector fields,
our method is able to detect higher-order critical points and the
presence of two first-order critical points within one cell, which,
to our knowledge, has not been achieved with the common methods \cite{helman91vvft} for the bilinear interpolation scheme before.
Figure~\ref{fig:intersectioncasesaabb} shows a corresponding example
and illustrates our classification method. Our approach is based on
the idea that investigating the qualitative behavior of 0-level sets
of the components of the interpolated vector field provides
information needed to compute the Poincar\'{e} index of a critical
point. All qualitatively different possibilities of this behavior
and the types of critical points yielded by each possibility are
completely classified using the computational group
theory tool GAP (\textbf{G}roups, \textbf{A}lgorithms, and
\textbf{P}rogramming)~\cite{GAP4}. See the enumeration of cases for marching cubes and generic substitope algorithms by Banks et al.~\cite{banks04ccs}
for a previous example of an application of computational group theory in the field of scientific visualization. 
Furthermore, we discuss a cell-based topology simplification method as well as the question of numerical stability. Our approach results in an efficient, accurate, and robust cell-based algorithm for detecting all critical points of
barycentrically or bilinearly interpolated 2D vector fields.

The paper is organized as follows. First, we will give a short
review of the visualization literature dealing with vector field
topology. Then, the theoretical foundations of vector field
topology, namely the theory of the qualitative behavior of
second-order dynamical systems along with such fundamental notions as
those of critical points, separatrices, and the Poincar\'{e} index
of a critical point are reviewed. Following this, we will present
our new approach---first the general framework will be discussed and
then applied to two cases: barycentrically and bilinearly
interpolated vector fields. Then, we will deal with open issues such
as critical points on the boundary of cells and numeric stability
followed by a more detailed description of the cell-based algorithm.
We will conclude giving results and a short review of our method.

This paper has accompanying material in the form of online resources, namely the GAP programs used in this paper (Online Resource 3) and lists of equivalence classes of colored cells referred to in Theorem~\ref{th:classificationbarycentric} (Online Resource 1) and Theorem~\ref{th:classificationbilinear} (Online Resource 2).

\section{Previous Work}

Topology-based methods for planar vector fields were first proposed to
the visualization community by Helman and
Hesselink~\cite{helman90srf}, employing methods from the theory of
dynamical systems~\cite{book-andronov73} to planar linear (or
linearized) vector fields in order to visualize flow
characteristics. Their work has triggered a large body of further
research on topology-based flow visualization, an overview of
which is given by Post et al.~\cite{post03saf} and Scheuermann and
Tricoche~\cite{scheuermann05tmf}.

The case of a planar linear vector field is relatively simple to
deal with, but it has a couple of drawbacks, most notably that only
first-order critical points can be detected. Much effort has been
put into methods to overcome this drawback imposed by the
interpolation scheme and to addresses the
issue of detecting and processing higher-order critical points of
interpolated vector fields, also of vector fields on arbitrary surfaces \cite{laramee09btbfva}. 
Such higher-order critical points can be
found, for example, by using piecewise linear interpolation schemes
in combination with a clustering of first-order critical points
according to Tricoche et al.~\cite{tricoche00tsm}. Another example
is the work by Theisel \cite{theisel02dvf}, who proposes a method
for designing piecewise linear planar vector fields of arbitrary
topology. Nonlinear interpolation schemes have been investigated by
Scheuermann et al.~\cite{scheuermann97vnv} and Zhang et
al.~\cite{zhang04vfd}. Scheuermann et al.~\cite{scheuermann98vnv}
propose a way to approximate higher-order critical points using
Clifford algebra. Li et al.~\cite{li06rho} use interpolation schemes
based on a polar coordinate representation to detect vector field
singularities on a surface.

In contrast to the global variational approach taken in \cite{polthier03ivfs} in which the authors construct a discrete Hodge decomposition in order to obtain location and type of critical points of vector fields on polyhedral surfaces, our approach is local and cell-oriented. It may thus be easier to implement when a discrete parametrization of the surface is already given and can be used in conjunction with other local, grid-based methods.    


\section{Theoretical Background}
 \label{sec:theory}

The methods used for extracting vector field topology are founded
upon the theory of the qualitative behavior of dynamical systems. 
Most of the brief review of the essential theory in this section is
taken from the books \cite{dumortier06qtpds,book-perko91,book-andronov73}.

\subsection{Dynamical Systems}

\begin{definition}
\label{def:dynamicalsystem} A \emph{dynamical system} on
$E\subset\R[n]$, an open subset of $\R[n]$, is a $C^1$ map
$\vec{\phi}:\R\times E\ra E$, where $\vec{\phi}=\vec{\phi}(t,\vec{x})$ with
$t\in \R$, $\vec{x}\in E$, that satisfies
\begin{enumerate}
    \item $\vec{\phi}(0,\vec{x})=\vec{x}$ $\forall_{\vec{x}\in E}$,
    \item $\vec{\phi}(s,\vec{x})\circ\vec{\phi}(t,\vec{x})=\vec{\phi}(s+t,\vec{x})$ $\forall_{\vec{x}\in E,~u,v\in
    \R}$.
\end{enumerate}
\end{definition}
Dynamical systems are closely related to autonomous systems of differential equations. On the one hand, let $E\subset\R[n]$ be open and let 
$\vec{f}\in C^1(E)$ be Lipschitz continuous on $E$. Then the initial value problem of the autonomous system of differential equations
\begin{equation}
	\dot{\vec{x}}=\frac{d\vec{x}}{dt}=\vec{f}(\vec{x}),
 \label{eq:nlsysglobaldef}
\end{equation}
with $\vec{x}(0)=\vec{x}_0$ for any $\vec{x}_0\in E$ has a unique solution defined for all $t\in\R$ by virtue of the Picard–Lindelöf theorem. For each initial value $\vec{x}_0\in E$ this induces a mapping $\vec{\phi}(t,\vec{x_0}):\R\times E\ra E$ referred to as a \emph{trajectory} of the system (\ref{eq:nlsysglobaldef}). The mapping $\vec{\phi}$ then lies in $C^1(\R\times E)$ and thus is a dynamical system in the sense of Definition \ref{def:dynamicalsystem}. It is called the dynamical system induced by the system of differential equations  (\ref{eq:nlsysglobaldef}). On the other hand, if $\vec{\phi}(t,\vec{x})$ is a dynamical system on $E\subset\R[n]$, then
\begin{equation*}
	\vec{f}(\vec{x})=\frac{d}{dt}\vec{\phi}(t,\vec{x})|_{t=0}	
\end{equation*}
defines a $C^1$ vector field $f$ on $E$ and for each $\vec{x}_0\in E$, $\vec{\phi}(t,\vec{x}_0)$ is the solution to the initial value problem with $\vec{x}=\vec{x}_0$ of (\ref{eq:nlsysglobaldef}).

\subsection{Critical Points}
 \label{sec:criticalpointstheory}

An important concept in the field of dynamical systems is the notion of a critical point:
 
\begin{definition}
An \emph{equilibrium} or \emph{critical point} $\vec{x_0}\in\R[n]$ of
a dynamical system $\vec{\phi}$ is a point where
$\vec{\phi}(t,\vec{x_0})=\vec{x}_0$ $\forall_{t\in \R}$. If the dynamical
system is induced by a system of differential equations
(\ref{eq:nlsysglobaldef}), then a \emph{critical point} $\vec{x}_0$ of
$\vec{\phi}$ is a point where $\vec{f}(\vec{x_0})=\vec{0}$. If the Jacobian
$D\vec{f}(\vec{x_0})$ has only eigenvalues with nonvanishing real part,
$\vec{x_0}$ is called a \emph{hyperbolic} critical point. If $\det
D\vec{f}(\vec{x_0})\neq 0$, then $\vec{x_0}$ is called
\emph{non-degenerate} or \emph{first-order} critical point.
Otherwise it is called \emph{degenerate} or \emph{higher-order}
critical point.
\end{definition}

A system of differential equations can be approximated by its
linearization around a critical point $\vec{x_0}$ without changing its
qualitative behavior if $\vec{x_0}$ is a hyperbolic critical point of
that system (Hartman-Grobman theorem~\cite{book-perko91}). For
planar linear systems, only certain types of critical points can
occur and these can be classified in terms of the eigenvalues of the
Jacobian as shown in Fig.~\ref{fig:linearjacobianclassification}
(here only non-degenerate cases with $\det D\vec{f}(\vec{x_0}) \neq 0$ are
considered).

\begin{figure}
\centering
\includegraphics{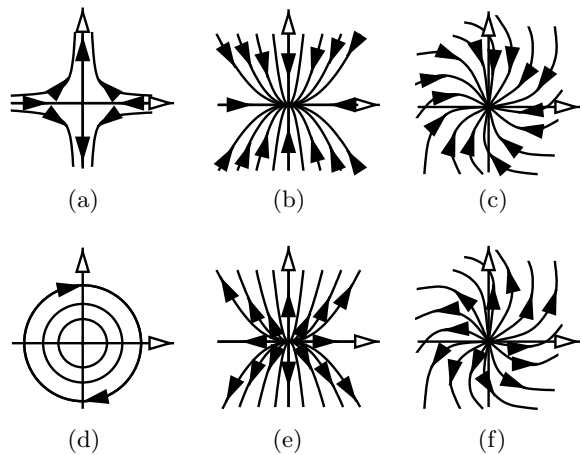}
\caption{First-order critical points of planar vector fields classified
  in terms of the eigenvalues $\lambda_1,\lambda_2$
  of the Jacobian: (a) saddle
  ($\Re(\lambda_1)\Re(\lambda_2)<0$, $\Im(\lambda_1)=\Im(\lambda_2)=0$),
  (b) attracting node ($\Re(\lambda_1),\Re(\lambda_2)<0$, $\Im(\lambda_1)=\Im(\lambda_2)=0$),
  (c) attracting focus ($\Re(\lambda_1),\Re(\lambda_2)<0$, $\Im(\lambda_1),\Im(\lambda_2)\neq0$),
  (d) center ($\Re(\lambda_1)=\Re(\lambda_2)=0$, $\Im(\lambda_1),\Im(\lambda_2)\neq0$),
  (e) repelling node ($\Re(\lambda_1),\Re(\lambda_2)>0$,
  $\Im(\lambda_1)=\Im(\lambda_2)=0$),
  (f) repelling focus ($\Re(\lambda_1),\Re(\lambda_2)>0$, $\Im(\lambda_1),\Im(\lambda_2)\neq0$).
  }
\label{fig:linearjacobianclassification}
\end{figure}

\subsection{Poincar\'{e} Index of a Critical Point}
 \label{sec:indextheory}

In order to classify critical points of vector fields one can use the notion of the \emph{Poincar\'{e} index} (or \emph{index} for short) of a critical point.

\begin{definition}
Let $\vec{f}=(f_1,f_2)^T$ be a $C^1(E)$ vector field on some open
$E\subset\R[2]$. If $\vec{x_0}$ is an isolated critical point of
$\vec{f}$ and $\Gamma\subset E$ a Jordan curve such that $\vec{x_0}$
is the only critical point of $\vec{f}$ in its interior, then the
\emph{Poincar\'{e} index} of $\vec{x_0}$ (or \emph{index} for short)
is
$$I_{\vec{f}}(\vec{x_0}):= I_{\vec{f}}(\Gamma):=\frac{1}{2\pi}\varoint_{\Gamma}d\theta\in \Z,$$
with $\theta=\arctan \frac{f_2}{f_1}$.
\end{definition}

It can be shown that isolated first-order critical points (i.e.\
isolated critical points for which the Jacobian of the vector field
in the critical point has no eigenvalue of 0) have a Poincar\'{e}
index of $\pm$1 and that a saddle has a Poincar\'{e} index of $-$1,
whereas non-saddles have a Poincar\'{e} index of $+$1 (see Fig.
\ref{fig:linearjacobianclassification}).

\subsection{Topological Equivalence and Sectors}

Let us now establish the fundamental notion of topological equivalence of vector fields:

\begin{definition}
	Suppose that $\vec{f}\in C^1(E)$ and $\vec{g}\in C^1(F)$ with open sets $E,F\subset \R[n]$. The two autonomous systems of differential equations $\dot{\vec{x}}=\vec{f}(\vec{x})$ and $\dot{\vec{x}}=\vec{g}(\vec{x})$ and thier induced vector fields are said to be \emph{topologically equivalent} if there exists an orientation preserving homeomorphism that maps trajectories of the first system onto trajectories of the seconds system.  
\end{definition}

Markus \cite{markus54gso} showed that for planar $C^1$ systems of differential equations the condition of being topologically equivalent is equivalent to the systems having the same \emph{separatrix} configurations, where a separatrix of a system (\ref{eq:nlsysglobaldef}) is a trajectory of (\ref{eq:nlsysglobaldef}) which is either a critical point, a limit cycle, or a trajectory lying on the boundary of a \emph{hyperbolic sector} as defined below.  This justifies the use of the term \emph{vector field topology} for the topological skeleton of a vector field consisting of separatrices of that field. 

The notion of sectors was first introduced by Poincar\'{e}~\cite{poincare93msc} to investigate higher-order critical points of planar systems, and later extended by Ben\-dixon~\cite{bendixon01scd} and Andronov~\cite{book-andronov73}. The idea is that one can describe the qualitative behavior of a planar $C^1$ vector field $\vec{f}$ in a suitable neighborhood $N(\vec{x}_0)$ of an isolated critical point $\vec{x_0}$ of $\vec{f}$ in terms of connected regions, so called \emph{sectors}, which form a partition of $N(\vec{x}_0)$. Within each sector the trajectories of $f$ exhibit a behavior that is characteristic for this type of sector. It can be shown that there exist three topologically different types of sectors: 

\begin{definition}
A \emph{sector} of a critical point $\vec{x_0}$ can be classified as a
\emph{hyperbolic}, \emph{parabolic}, or \emph{elliptic} sector
according to its topological structure as shown in
Fig.~\ref{fig:topsectors}.
\end{definition}

\begin{figure}[tb]
\centering
\includegraphics{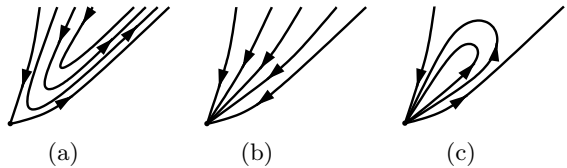}
\caption{The three distinct topological sectors of the vector field
         around
         an isolated critical point with nonvanishing Jacobian (modulo the operation of reversing the vector field direction):
         (a) hyperbolic sector (with two separatrices),
         (b) parabolic sector,
         (c) elliptic sector. }
\label{fig:topsectors}
\end{figure}

Tricoche et al.~\cite{tricoche00tsm} also used the idea of sectors to model higher-order critical points with a piecewise linear interpolation scheme.

\subsection{Bifurcation Theory}

Bifurcation theory is based on the notion of \emph{structural
stability} of a vector field due to Andronov and
Pontryagin~\cite{andronov37sg}: if the qualitative behavior of a
dynamical system (\ref{eq:nlsysglobaldef}) does not change for small
perturbations of the vector field $\vec{f}$, then that vector field is
called \emph{structurally stable}. If $\vec{f}$ is not structurally
stable, the topological skeleton of the vector field changes even
under small perturbations of the vector field $\vec{f}$ and $\vec{f}$ is
said to be \emph{structurally unstable} or to lie within the
\emph{bifurcation set}.

The perturbation of the vector field is usually modeled by an
additional parameter $\mu\in\R$:
\begin{equation}
\frac{d\vec{x}}{dt}=\vec{f}(\vec{x},\mu).
\label{eq:nlsysbifurcation}
\end{equation}
A value of $\mu=\mu_0\in\R$ for which the system
(\ref{eq:nlsysbifurcation}) lies in the bifurcation set is called a
\emph{bifurcation point} of (\ref{eq:nlsysbifurcation}) and $\mu_0$
is then called \emph{bifurcation value} of
(\ref{eq:nlsysbifurcation}).

Bifurcation theory has been studied extensively, see for example the
book by Guckenheimer and Holmes~\cite{book-guckenheimer90}. It also
explains the splitting of higher-order critical points into several
nearby first-order critical points: it can be shown that, if a
vector field $\vec{f}$ has an isolated critical point $\vec{x_0}$ of
higher order, there exists a perturbation of $\vec{f}$ such that
$\vec{x_0}$ splits into several isolated first-order critical points
nearby.


\section{Classifying Critical Points Without Using the Jacobian}
 \label{sec:classifying_points}

In this section, we show how the Poincar\'{e} index of an isolated
first-order critical point can be computed by evaluating the sign configuration
of the vector field's components on a finite set of sample points in
a neighborhood of the critical point, reminiscent of the
marching-cubes classification applied to isosurfaces in scalar
fields.

\subsection{Setting}
\label{sec:setting}

From now on let $\vec{f}\in C^1(E)$ be a vector field $\vec{f}:E\subset\R[2]\ra\R[2]$ defined on an open $E\subset\R[2]$ such that $\vec{f}$ is Lipschitz continuous on $E$ and only has non-degenerate first-order isolated critical points.

\subsection{$\omega$-level sets, Areas of Characteristic Behavior}
\label{sec:characteristicareas}

We introduce the notion of \emph{areas of characteristic behavior}
that will enable us to calculate the index of a critical point.

\begin{lemma}
Let $\vec{f}=(f_1,f_2)^T$ be a vector field like in \ref{sec:setting}
and $\vec{x_0}\in E$ an isolated first-order critical point of
$\vec{f}$, i.e. $\det D\vec{f}(\vec{x_0})$\\$\neq 0$. Then, $\vec{x_0}$ lies in
the intersection of the 0-level sets $c_1$ and $c_2$ of $f_1$ and
$f_2$. Furthermore, there exists an $\epsilon>0$ such that $c_1$ and
$c_2$ are infinitesimally straight lines (i.e. lie infinitesimally
close to straight lines) in an $\epsilon$-neighborhood of $\vec{x_0}$.
\end{lemma}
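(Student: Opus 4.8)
The plan is to split the statement into its two assertions, dispatch the first immediately from the definitions, and then establish the second using the Implicit Function Theorem together with first-order differentiability at $\vec{x_0}$. For the first claim I would simply unfold definitions: since $\vec{x_0}$ is a critical point of the system induced by $\vec{f}$, we have $\vec{f}(\vec{x_0})=\vec{0}$, i.e. $f_1(\vec{x_0})=f_2(\vec{x_0})=0$. As the $0$-level set is by definition $c_i=\{\vec{x}\in E : f_i(\vec{x})=0\}$, it follows at once that $\vec{x_0}\in c_1$ and $\vec{x_0}\in c_2$, hence $\vec{x_0}\in c_1\cap c_2$.

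For the second claim I would first note that $\det D\vec{f}(\vec{x_0})\neq 0$ forces the two rows of the Jacobian, namely the gradients $\nabla f_1(\vec{x_0})$ and $\nabla f_2(\vec{x_0})$, to be linearly independent; in particular each gradient is nonzero (and, as a free byproduct, the two tangent directions below are distinct). Applying the Implicit Function Theorem to each $f_i\in C^1(E)$ at $\vec{x_0}$, using $\nabla f_i(\vec{x_0})\neq 0$, yields an $\epsilon>0$ and a local $C^1$ parametrization of $c_i$ as a graph; in particular $c_i$ is a simple $C^1$ curve through $\vec{x_0}$ inside the ball of radius $\epsilon$, with tangent line $L_i=\{\vec{x} : \nabla f_i(\vec{x_0})\cdot(\vec{x}-\vec{x_0})=0\}$. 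Shrinking $\epsilon$ if necessary, a single $\epsilon$ works for both $i$.

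To make \emph{infinitesimally straight} precise I would quantify the deviation of $c_i$ from $L_i$. Writing $\vec{h}=\vec{x}-\vec{x_0}$ and using that $f_i$ is differentiable at $\vec{x_0}$ with $f_i(\vec{x_0})=0$, we have $f_i(\vec{x_0}+\vec{h})=\nabla f_i(\vec{x_0})\cdot\vec{h}+o(\|\vec{h}\|)$. Hence at a point of $c_i$ the linear part $\nabla f_i(\vec{x_0})\cdot\vec{h}$ equals $o(\|\vec{h}\|)$, so the distance from such a point in the $\epsilon$-neighborhood to $L_i$ is $o(\epsilon)$. Rescaling the neighborhood to unit size therefore carries $c_i$ to a curve whose distance from the fixed line $L_i$ tends to $0$ as $\epsilon\to 0$; this is exactly the sense in which $c_1$ and $c_2$ lie infinitesimally close to straight lines.

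The main obstacle is interpretive rather than computational: the phrase \emph{infinitesimally straight} carries no a priori precise meaning, so the real content of the proof is committing to the reading above (vanishing deviation from the tangent line after rescaling) and verifying it is supplied by plain first-order differentiability. Everything else is routine — the first claim is a definitional triviality and the curve structure is immediate from the Implicit Function Theorem. The only mild subtlety is that the $C^1$ hypothesis yields a little-$o$ remainder rather than a quadratic one, but a little-$o$ deviation is already enough for the infinitesimal-straightness conclusion.
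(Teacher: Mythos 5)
Your proposal is correct, and it is a rigorous rendering of an argument the paper leaves largely informal, with one genuinely different ingredient. The paper's own proof is three sentences: the first claim is dispatched definitionally, exactly as you do, and for the second it merely invokes Taylor's theorem, $\vec{f}(\vec{x})\approx\vec{f}(\vec{x_0})+D\vec{f}(\vec{x_0})(\vec{x}-\vec{x_0})$ on $B_{\epsilon}(\vec{x_0})$, and observes that since $D\vec{f}(\vec{x_0})$ has full rank the $0$-level sets of the \emph{linearization} are straight lines intersecting in $\vec{x_0}$ --- it never shows that the actual level sets $c_1,c_2$ are curves, nor quantifies their closeness to those lines, so ``infinitesimally straight'' is left at the level of the symbol $\approx$. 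You differ in two respects. First, you bring in the Implicit Function Theorem to establish that each $c_i$ is locally a simple $C^1$ curve through $\vec{x_0}$ with tangent line $L_i=\{\vec{x}:\nabla f_i(\vec{x_0})\cdot(\vec{x}-\vec{x_0})=0\}$; the paper skips this step entirely, although it is what rules out degenerate level-set behavior near $\vec{x_0}$ and underwrites the later picture of $c_1,c_2$ partitioning the ball into four areas. Second, you commit to a precise reading of ``infinitesimally straight'' --- deviation from $L_i$ of order $o(\|\vec{h}\|)$, hence $o(\epsilon)$ after rescaling the ball to unit size --- and verify it from plain differentiability at $\vec{x_0}$, correctly noting that the little-$o$ remainder afforded by $C^1$ (rather than a quadratic one) already suffices. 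What the paper's route buys is brevity, adequate for its downstream use of the lemma; what your route buys is an actual proof of the literal statement, supplying exactly the curve structure and the quantitative estimate that the paper elides.
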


\begin{proof}
By definition, a critical point of $\vec{f}$ has to lie in the
intersection set of the 0-level sets of the components of $\vec{f}$.
Since $\vec{f}$ is linearizable around $\vec{x_0}$, Taylor's theorem
leads to
$\vec{f}(\vec{x})\approx\vec{f}(\vec{x_0})+D\vec{f}(\vec{x_0})(\vec{x}-\vec{x_0})$
with $\vec{x}$ in the $\epsilon$-ball $B_{\epsilon}(\vec{x_0})$. 
Since $D\vec{f}(\vec{x_0})$ has full rank, the 0-level sets of
$D\vec{f}(\vec{x_0})$ are straight lines intersecting in $\vec{x_0}$.
\end{proof}

\begin{definition}
\label{def:characteristicareas} Let $\vec{f}=(f_1,f_2)^T$ be a vector
field like in \ref{sec:setting} and $\vec{x_0}\in E$ an isolated
first-order critical point of $\vec{f}$. Then for an $\epsilon>0$ the
0-level sets $c_1$ of $f_1$ and $c_2$ of $f_2$ partition the
$\epsilon$-ball $B_{\epsilon}(\vec{x_0})$ of $\vec{x_0}$ into four
disjoint open subsets $A_1,\dots,A_4$ called \emph{areas of
characteristic behavior} or \emph{areas} for short.

In each area, the signs of $f_1$ and $f_2$ do not change, i.e. for arbitrary $1\leq i\leq 4$ and ${\vec{x},\vec{y}\in A_i}$ the following holds: 
\begin{equation*}
	f_j\neq0 \text{ and } \text{sgn}(f_j(\vec{x}))=\text{sgn}(f_j(\vec{y}))\text{ for $j=1,2$}.
\end{equation*}

The set of areas $A_1,\dots,A_4$ around a critical point can be seen
as an ordered, cyclic sequence of areas according to the order in
which they intersect with the boundary of $B_{\epsilon}(\vec{x_0})$,
as illustrated in Fig.~\ref{fig:areasequence}. We consider clockwise
traversal direction unless otherwise noted.

Since each of the two vector field components can either be positive
or negative in one area, there exist four different types of
characteristic areas as shown in Fig.~\ref{fig:areaturning}(a). In
the following, areas are written as ordered 2-tuples over the set
$\{+,-\}$ or equivalently as elements of the set $\{1,2,3,4\}$,
where $1=(+,+)$, $2=(+,-)$, $3=(-,+)$, $4=(-,-)$. Area sequences can
then be written as ordered 4-tuples over the set of areas, i.e.\ as
ordered 4-tuples over the set $\{1,\dots,4\}$. Two areas that lie
next to each other are called \emph{adjacent} areas. As $A_1$ and
$A_4$ are adjacent, the indices of the areas in the area sequence
are cyclic, i.e. the area sequence $A_1,\dots,A_4$ is glued together
at $A_1$ and $A_4$.
\end{definition}

\begin{remark}
For two adjacent areas $A_i,A_{i+1}$ of an area sequence with
$\vec{x}\in A_i$, $\vec{y}\in A_{i+1}$, either
\begin{align*}
  & \text{sgn}(f_1(\vec{x}))\neq\text{sgn}(f_1(\vec{y}))\wedge\text{sgn}(f_2(\vec{x}))
    = \text{sgn}(f_2(\vec{y})) \text{~~or} \\
  & \text{sgn}(f_1(\vec{x}))
    =
    \text{sgn}(f_1(\vec{y}))\wedge\text{sgn}(f_2(\vec{x}))\neq\text{sgn}(f_2(\vec{y}))
\end{align*}
holds, i.e.\ exactly one component flips its sign for two adjacent
areas but not both.
\end{remark}

\begin{figure}
\centering
\includegraphics{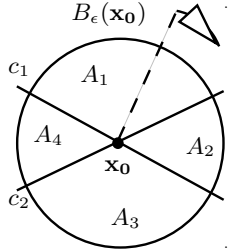}
\caption{Area sequence ($A_1,\dots,A_4$) around an isolated
         critical point of a $C^1$ vector field $\vec{f}=(f_1,f_2)^T$
         defined by the 0-level sets $c_1$ and $c_2$ of $f_1$ and $f_2$.
         The area sequence of a critical point $\vec{x}_0$ can be constructed by
         walking monotonously around the boundary of an $\epsilon$-ball
         $B_{\epsilon}(\vec{x}_0)$ of $\vec{x}_0$ starting at an arbitrary position
         (shown as dashed line above) and collecting the intersection points
         of $c_1$ and $c_2$ with $B_{\epsilon}(\vec{x}_0)$.}
\label{fig:areasequence}
\end{figure}

\subsection{Area Sequence and Types of Critical Points}
 \label{sec:areasequence}

\begin{figure}
\includegraphics[scale=0.85]{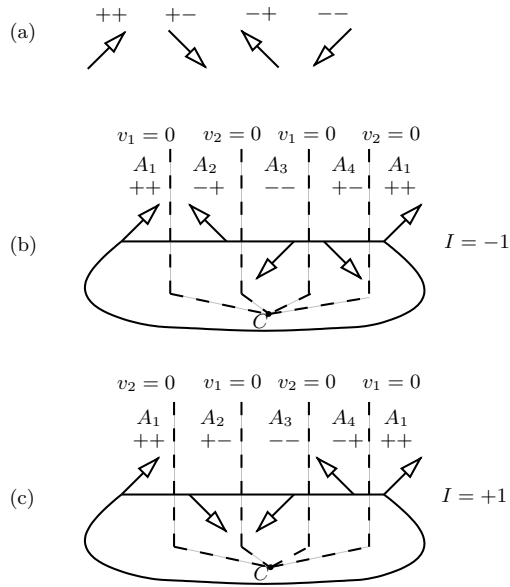}
\centering
 \caption{Areas of characteristic behavior and area sequences around a critical point $C$:
(a) the four types of characteristic areas $A_i$, (b)
counterclockwise and (c) clockwise turning behavior of the areas,
yielding a critical point of index $-$1 and $+$1, respectively.}
\label{fig:areaturning}
\end{figure}

\begin{definition}
 \label{def:areaturning}
Let $\vec{x_0}$ be an isolated first-order critical point of $\vec{f}$ defined like
above with an area sequence\\$(A_1, \dots, A_4)$. Then for a pair
$A_{i},A_{i+1}$ of adjacent characteristic areas of the area
sequence of $\vec{x_0}$ (see Fig.~\ref{fig:areaturning}(a)), a
\emph{turning} in the characteristic behavior of the vector field
can be defined. This turning can either be a \emph{clockwise
turning} or a \emph{counterclockwise turning} as defined in
Figs.~\ref{fig:areaturning}(b) and \ref{fig:areaturning}(c),
respectively.
Since Figs.~\ref{fig:areaturning}(b) and \ref{fig:areaturning}(c)
contain all 8 possible configurations of pairs of characteristic
areas, only a clockwise or counterclockwise turning behavior can
occur and the list in Figs.~\ref{fig:areaturning}(b) and
\ref{fig:areaturning}(c) is complete.
\end{definition}

\begin{remark}
One adjacent pair of characteristic areas already 
determines the whole area sequence in terms
of its turning behavior. This is the case because the vector field
components whose signs flip are alternating in the area sequence.
Thus, an area sequence can either show a clockwise turning behavior
or a counterclockwise turning behavior, but not a mixture of these
and we will also speak of a \emph{clockwise} or
\emph{counterclockwise turning behavior} of the area sequence as a
whole---a \emph{clockwise} or \emph{counterclockwise area sequence}
for short.
\end{remark}

We can use the turning behavior of area sequences around a critical
point to determine its Poincar\'{e} index:

\begin{theorem}
Let $\vec{f}\in C^1(E)$ be a vector field like in \ref{sec:setting} and let $\vec{x_0}\in E$ be an isolated critical point of $\vec{f}$ of first
order, such that $D\vec{f}(\vec{x_0})$ has full rank. If the area sequence of
$\vec{x_0}$ is counterclockwise, then the Poincar\'{e} index of
$\vec{x_0}$ is $I_{\vec{f}}(\vec{x_0})=-1$ and $\vec{x_0}$ is a topological
saddle of $\vec{f}$. If the area sequence of $\vec{x_0}$ is clockwise,
then $I_{\vec{f}}(\vec{x_0})=+1$ and $\vec{x_0}$ is a non-saddle
first-order critical point of $\vec{f}$.
\end{theorem}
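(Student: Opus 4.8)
The plan is to compute the Poincaré index directly from its definition as the winding number $I_{\vec{f}}(\vec{x_0}) = \frac{1}{2\pi}\varoint_{\Gamma} d\theta$ with $\theta = \arctan\frac{f_2}{f_1}$, taking $\Gamma$ to be the boundary circle $\partial B_{\epsilon}(\vec{x_0})$ for $\epsilon$ small enough that the preceding lemma applies, so that the level sets $c_1$ and $c_2$ are infinitesimally straight inside $B_{\epsilon}(\vec{x_0})$. The key observation is that the total change in $\theta$ as we traverse $\Gamma$ once is accumulated entirely from the transitions between the four characteristic areas $A_1,\dots,A_4$, since within each single area neither $f_1$ nor $f_2$ changes sign and $\theta$ stays confined to one quadrant of angle. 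First I would pin down how $\theta$ moves as we cross from one area into an adjacent one. By the remark following Definition~\ref{def:characteristicareas}, crossing into an adjacent area flips the sign of exactly one of $f_1,f_2$. Crossing $c_1$ (where $f_1=0$, $f_2\neq 0$) carries $\theta$ through an odd multiple of $\frac{\pi}{2}$, while crossing $c_2$ (where $f_2=0$, $f_1\neq 0$) carries $\theta$ through an integer multiple of $\pi$; in either case the angle $\theta$ advances by a net $\pm\frac{\pi}{2}$, with the sign fixed by whether the crossing is a clockwise or counterclockwise turning in the sense of Definition~\ref{def:areaturning}.

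Next I would assemble these four quarter-turns. Traversing $\Gamma$ once we pass through all four areas $A_1,\dots,A_4$ in cyclic order and thus cross the level sets exactly four times, returning to the starting area; by the remark after Definition~\ref{def:areaturning} the turning behavior is uniform along the whole sequence (the flipping components alternate, so one cannot mix clockwise and counterclockwise turnings). Hence for a clockwise area sequence all four crossings contribute $+\frac{\pi}{2}$ each, giving a total change $\Delta\theta = 4\cdot\frac{\pi}{2} = 2\pi$ and therefore $I_{\vec{f}}(\vec{x_0}) = \frac{1}{2\pi}\cdot 2\pi = +1$; for a counterclockwise area sequence each crossing contributes $-\frac{\pi}{2}$, giving $\Delta\theta = -2\pi$ and $I_{\vec{f}}(\vec{x_0}) = -1$. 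The sign $-1$ corresponds to a saddle and $+1$ to a non-saddle, matching the index facts recorded in Section~\ref{sec:indextheory}, which completes the identification of the critical-point type.

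The main obstacle is making the heuristic ``the angle advances by $\pm\frac{\pi}{2}$ at each crossing'' fully rigorous, since $\theta=\arctan\frac{f_2}{f_1}$ is only defined modulo $\pi$ and is singular precisely where $f_1=0$. To handle this cleanly I would work with the continuous lift of the angle of the vector $(f_1,f_2)$ rather than with $\arctan$ itself, i.e.\ track the argument of $f_1 + i f_2$ as a continuous function along $\Gamma$; the winding number is then the net change of this lifted argument divided by $2\pi$, which is manifestly well defined and integer-valued. The role of the lemma is exactly to guarantee that for small enough $\epsilon$ the field on $\Gamma$ is a small perturbation of its linearization $D\vec{f}(\vec{x_0})$, whose level sets are genuinely straight lines through $\vec{x_0}$; since the index is a homotopy invariant and $D\vec{f}(\vec{x_0})$ has full rank, I may compute the winding number for the linear field, where each of the four quadrant crossings contributes precisely $\pm\frac{\pi}{2}$ and the sign is determined by orientation. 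The only remaining bookkeeping is to verify that the orientation convention of the clockwise/counterclockwise turning in Definition~\ref{def:areaturning} agrees with the sign of the winding integral, which is a direct check on one of the eight configurations in Fig.~\ref{fig:areaturning} that then propagates to the rest by the alternation remark.
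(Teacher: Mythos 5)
Your argument is correct, and while it shares the paper's overall strategy---pass to the linearization and read the index off the angular progress of $(f_1,f_2)$ along a small closed curve structured by the area sequence---the two decisive steps are genuinely different. For the count, the paper takes one discrete sample per characteristic area, argues that consecutive samples lie in adjacent quadrants so each turn satisfies $|\theta|<\pi$, hence the total $\Theta$ obeys $|\Theta|<4\pi$, and then pins down $\Theta=\pm2\pi$ by integrality ($\Theta\in 2\pi\Z$) together with $\Theta\neq0$, which follows from $\det D\vec{f}(\vec{x_0})\neq0$. You instead follow the continuous lift of $\arg(f_1+if_2)$ between consecutive level-set crossings: on each of the four arcs both signs are constant, so the lifted angle is trapped in a single quadrant with its endpoints on the two bounding axes (the flipping components alternate), contributing exactly $\pm\tfrac{\pi}{2}$ per arc with a uniform sign by the alternation remark; this yields $\Theta=\pm2\pi$ by exact computation rather than by the paper's bound-plus-integrality pigeonhole, and it makes the sign bookkeeping explicit through a single configuration check propagated along the sequence, which the paper delegates to Fig.~\ref{fig:areaturning}. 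Your second departure is the transfer from the linearization back to $\vec{f}$: the paper invokes the Hartman--Grobman theorem, which strictly speaking requires hyperbolicity and so does not literally cover a first-order center (purely imaginary eigenvalues), whereas your appeal to homotopy invariance of the winding number---legitimate because invertibility of $D\vec{f}(\vec{x_0})$ makes the straight-line homotopy between $\vec{f}$ and its linearization nonvanishing on a sufficiently small circle---treats all first-order points uniformly, so on this point your route is actually more robust than the paper's. The only blemish is a wording slip in your first paragraph, where ``the angle advances by a net $\pm\tfrac{\pi}{2}$'' attaches the quarter-turn to the crossing point itself rather than to the arc between crossings; your later lift-and-confinement formulation states it correctly and is what the assembled proof uses.
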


\begin{proof}
The area sequence can also be interpreted as a sequence of four
qualitative samples of the vector field values lying on a piecewise
linear closed curve $\Gamma\subset B_{\epsilon}(\vec{x_0})$ that
contains $\vec{x_0}$ (qualitative in the sense that just the signs of
the vector field components are sampled). We will now prove that
summing up the angle change of the vector field between these
discrete samples and performing this for all four samples yields the
same result as evaluating the continuous integral of the change of
angle of $\vec{f}$ over a continuous Jordan curve $\hat{\Gamma}$
around $\vec{x_0}$, only containing one critical point $\vec{x_0}$.
Therefore, the Poincar\'{e} index of $\vec{x_0}$ can be computed by
just identifying the sequence of characteristic areas around
$\vec{x_0}$.

\begin{figure}
\includegraphics{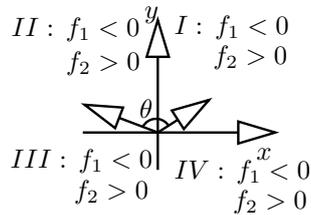}
\centering
 \caption{The change of angle of the vector field for two adjacent
          areas; since only one component flips sign, the two vectors lie in
          adjacent quadrants I--IV and one has $|\theta|<\pi$.}
\label{fig:areasvector}
\end{figure}

Let $\tilde{\vec{f}}$ be the linearized system of $\vec{f}$ defined on
an $\epsilon$-ball around $\vec{x_0}$. Since for two adjacent areas
only the sign of one component of $\vec{f}$ changes, the vector at
the first sampling point and the vector at the second sampling
point lie in two different but adjacent quadrants around $\vec{x_0}$
(see Fig.~\ref{fig:areasvector}). As the value of the linear
approximation $\tilde{\vec{f}}$ changes linearly on $\Gamma$ between
two sample points and $\tilde{\vec{f}}$ is continuous, the turning
behavior is uniquely determined by the two sample points; and
since the two sample points lie in adjacent quadrants defined by
the coordinate axes, the vector rotates about an angle $\theta$
with $|\theta|<\pi$ when walking on $\Gamma$ from one sample point
to the next. Thus, the whole area sequence (for four pairs of
areas) yields the total change in angle $\Theta$, with
$|\Theta|<4\pi$. As the area sequence wraps around the critical
point, the first and the last vector of our sample sequence are
the same and thus the change in angle of the vector field has to
be $2k\pi$ with $k\in\Z$, see Fig.~\ref{fig:areaturning}(b) and
Fig.~\ref{fig:areaturning}(c). As $\det D\vec{f}(\vec{x_0})\neq 0$,
for some $\epsilon$ also $\det D\vec{f}(\vec{x})\neq 0$ for $\vec{x}\in
B_{\epsilon}(\vec{x_0})$, i.e.\ the vector field is not constant in
a neighborhood of $\vec{x_0}$ and one has $k\neq 0$. It is $|k|<2$
because $|\Theta|<4\pi$. Thus, only $k=\pm 1$ is possible and the
critical point of the linearized field $\tilde{\vec{f}}$ is of
Poincar\'{e} index $I_{\tilde{\vec{f}}}(\vec{x_0})=\pm1$.

Since for linear systems all critical points of index $\pm1$ have
been classified (see Sections~\ref{sec:criticalpointstheory} and
\ref{sec:indextheory} as well as
Fig.~\ref{fig:linearjacobianclassification}), a critical point can
either be a saddle (for $I_{\tilde{\vec{f}}}(\vec{x_0})=-1$), if the
area sequence shows a counterclockwise turning behavior, or a
non-saddle (i.e.\ a source or sink for
$I_{\tilde{\vec{f}}}(\vec{x_0})=-1$), if the area sequence shows a
clockwise turning behavior. According to the Hartman-Grobman
theorem, the Poin\-car\'{e} indices of first-order isolated critical points are
invariant under linearization and $\tilde{\vec{f}}$ has a saddle in
$\vec{x_0}$ if and only if $\vec{f}$ has a topological saddle in
$\vec{x_0}$, which completes the proof of the theorem.
\end{proof}

\subsection{Invariant Operations on the Area Sequence}
 \label{sec:invariantoperationsareasequence}

There are 8 possible area sequences as there are four different
areas and each area sequence can be clockwise or counterclockwise.
Since these sequences yield only two types of critical points
distinguishable by their Poincar\'{e} index, it is desirable to
build equivalence classes of area sequences that yield critical
points of the same Poincar\'{e} index. These equivalence classes can
be directly constructed using elementary group theory. We refer to
the book~\cite{book-huppert67} for a comprehensive overview of the
theory of finite groups.

First it is obvious but nonetheless interesting to observe that
the turning behavior of the area sequence is invariant under
rotation and also invariant under a simultaneous sign flip of both
vector field components. Figure~\ref{fig:areaturning} illustrates
this: for example, the first and the last area pair of
Fig.~\ref{fig:areaturning}(b) and Fig.~\ref{fig:areaturning}(c)
can be related through a sign flip, likewise the second and third
area pair.

Rotations and sign flips can be modeled as group operations. Given a group $G$ and a set $X$, a (left) \emph{group action of $G$ on $X$} is a mapping $\circ:G\times X\ra X$ denoted $(g,x)\mapsto g\cdot x$ such that $e\cdot x=x$ for all $x\in X$ (here $e$ denotes the neutral element in $G$) and $(gh)\cdot x = g\cdot(h\cdot x)$ for all $g,h\in G$ and all $x\in X$.  The \emph{orbit} of an element $x\in X$ under a group action of
$G$ on $X$ is the set $O(x)=\{ g\cdot x : g\in G\}$. Orbits of a
group action on a set $X$ define equivalence classes on $X$, and
the set of all orbits is a partition of $X$.

The rotation of the area sequence is identical to the operation of the
cyclic group $C_4$ on the indices of the area sequence. The sign
flip can be modeled as the operation of a group isomorphic to the
symmetric group $S_2$ on the signs of the components. From now on,
the first group is referred to as the \emph{shape group} $G_s$ and
the second group as the (sign) \emph{flip group} $G_f$. The direct
product of $G_s$ and $G_f$ is called the \emph{coloring group} $G_c=G_s\times G_f$.

Now we define the group action of $G_c$ on the set of area
sequences. Let $\pi_r:G_c\ra G_s$ be the projection of $G_c$ onto
$G_s$ and $\pi_f:G_c\ra G_f$ the projection of $G_c$ onto $G_f$.
Further, let $g\in G_c$ be an element of the coloring group $G_c$
with its projections $\pi_r(g)=s\in G_s$ and $\pi_f(g)=f\in G_f$
onto the shape group and the flip group, respectively. Then, $g$
acts on an area sequence $(A_1,\dots,A_4)$ as defined below:
\begin{equation*}
g(A_1,\dots,A_4):=(fA_{s(1)},\dots,fA_{s(4)}),
\end{equation*}
where $s$ is a permutation of the indices $\{1,\dots,4\}$ and $f$
a self-inverse permutation on the set of areas, $\{1,\dots,4\}$.
$f$ can be interpreted as an element that flips the signs of all
components of the vector field: areas of type $(+,+)=\text{1}$ are
mapped to areas of type $(-,-)=\text{4}$ and vice versa; an area
of type 2 is interchanged with an area of type 3.

The orbits of this group action on the set of all possible area sequences yield equivalence classes of area sequences such that each equivalence class contains
all area sequences that can be mapped onto each other by rotations and sign flips.

\subsection{Interpolated Vector Fields}

Typical vector field data is given on a grid. Therefore, the
vector field needs to be interpolated to obtain values at non-grid
points. Local (i.e.\ cell-wise) interpolation schemes are often
chosen for the sake of simplicity and speed. In the following two
sections, we will have a closer look at two interpolation schemes:
barycentric interpolation on triangles and bilinear interpolation
on rectangles.

Both interpolation schemes share some important properties. On the one hand,
inside cells, both interpolation schemes are of class $C^{\infty}$
and thus linearizable. On the other hand, they are defined locally or in a
piecewise way: a different interpolant is used for each cell. The
interpolation is only of class $C^0$ across cell boundaries and it is
linear and continuous along cell boundaries. Basic concepts and tools are
developed for the simpler case of the barycentric interpolant.
Then, these tools are adapted and extended for bilinear
interpolation.

Note that the case of linear interpolation on triangular meshes can equally efficiently be solved by calculating the rotation along each triangle directly. As there can either be no or exactly one critical point of Poincar\'{e} index $\pm 1$ inside each cell, a cell with zero rotation has no critical point on its inside, whereas a nonzero rotation implies that there is a critical point inside the cell and already determines its Poincar\'{e} index. None the less we will describe the barycentric setting in the following as the notions introduced there will also be used in the more subtle case of the bilinear interpolant.


\section{Barycentric Case}
 \label{sec:barycentriccase}

A $d$-simplex $s$ on $d+1$ vertices offers a natural way for the
linear interpolation via barycentric coordinates. For the following,
we will again restrict the dimension to $d=2$, where a simplex is
identical with a triangle.

\subsection{Cells}

From now, a cell $T$ denotes a triangle with vertices
$\vec{x}_1$, $\vec{x}_2$, $\vec{x}_3\in\R[2]$ and with one real 2D vector
(from the vector field) attached to each vertex such that a cell
becomes a 3-tuple
$T=((\vec{x}_1,\vec{v}_1),(\vec{x}_2,\vec{v}_2),(\vec{x}_3,\vec{v}_3))\subset(\R[2]\times\R[2])^3$.
Most of the time, we are only interested in the vector field
values and not the position of the vertices, just writing
$T=(\vec{v}_1,\vec{v}_2,\vec{v}_3)$. The vector field inside a cell $T$
will be written $\vec{f}:|T|\ra\R[2]$, when $|T|\subset\R[2]$ is the
set of convex combinations of the vectors
$\vec{x}_1,\vec{x}_2,\vec{x}_3$. Further, let $\vec{f}=(f_1,f_2)^T$. We
restrict the vector field values on the vertices to be non-zero
such that the interpolated vector field has isolated singularities
only.

\subsection{Sector Sequences and level sets of the Interpolant}
\label{sec:barycentricsecseq}

The vector field is interpolated component-wise inside the cell
using barycentric coordinates. The interpolation is linear, and the
interpolant is continuous in the cell and on its boundaries.

\begin{definition}
\label{def:activeedgebarycentric} Given the cell $T$, let us look
only at one component of $\vec{f}$, say $f_1$ ($f_2$). If two
adjacent vertices (i.e. two endpoints of an edge of the cell) have
values of $f_1$ ($f_2$) such that one value is smaller than
$\omega$ and the other one bigger than $\omega$, then by virtue of
continuity and linearity of the interpolant on cell edges there
exists exactly one point on the edge with an $f_1$-value
($f_2$-value) of $\omega$. Such an edge is called an
\emph{$\omega$-active edge} for $f_1$ ($f_2$). If the two values
are both smaller or bigger than $\omega$, there exists no such
point and the edge is called an \emph{$\omega$-inactive edge} for
$f_1$ ($f_2$).
\end{definition}

\begin{remark}
We can now observe the following:

1) For one cell, there exist at most two $\omega$-active edges for each
       component and, thus, at most two points on the cell boundary with value $\omega$
       for each component if all vertex values of each component differ from $\omega$.

2) Since a critical point of the interpolated vector field can only
       occur in a crossing point of the 0-level sets of the components of
       the vector field, a cell with an interior critical point
       requires two active edges per component.
       Furthermore, there can at most be one critical point
       inside a cell because the 0-level sets of the interpolant are straight
       lines.

3) Since there can only be one critical point inside a cell, the sector
       sequence around a critical point can be found by looking for intersections
       of the 0-level sets of the components with the cell boundary.
       Each active edge yields exactly one 0-value
       point on a boundary edge, the position of which can be obtained
       through linear interpolation between adjacent vertices.
\end{remark}
If one collects the sequence of 0-value points of the components
while traversing the boundary of the cell (as shown in
Fig.~\ref{fig:walktriangle}), one can not only construct the area
sequence but also determine whether the 0-level sets cross and thus
whether there is a critical point of the interpolated vector
field. Let $a$ denote a 0-value point of the first component and
let $b$ denote a 0-value point of the  second component of the
vector field. Then, for the sequences $aabb$, $bbaa$, $abba$, or
$baab$, the cell does not contain a critical point, whereas for
the sequences $abab$ and $baba$, there is a critical point. These
$\binom{4}{2}=6$ intersection sequences are all possible sequences
for the barycentric interpolant because there can be at most two
0-value points for each component on the boundary of the cell.

The sequence of the crossings of the 0-level sets of the components
together with the information of how the signs of the components
change defines the sequence of characteristic areas around a
critical point and thus its Poincar\'{e} index as we have shown in
Section~\ref{sec:areasequence}. We will make use of this in the
following section.

\begin{figure}
\includegraphics{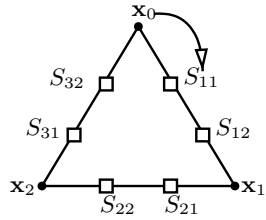}
\centering
  \caption{
  Triangle cell data structure with two edge slots on each edge.
  An edge slot is identified with a possible 0-value point of
  one of the vector field components.
  $S_{ij}$ refers to the $j$-th slot of edge $i$.}
\label{fig:walktriangle}
\end{figure}

\subsection{An Edge-Coloring Problem}
\label{sec:edgecoloringbarycentric}

Before we can classify critical points, a suitable data structure
for describing the area sequence around a critical point is
needed. Just looking at the sign configuration of the components
in the vertices (which can be seen as a coloring of the vertices)
does not allow us to distinguish the types of critical points
because the area sequence is not fully determined just by the
signs of the components in the vertices (see
Fig.~\ref{fig:vertexconfigtriangle}).

\begin{figure}
\centering
\begin{tabular}{@{}cc@{}}
\includegraphics[width=3cm]{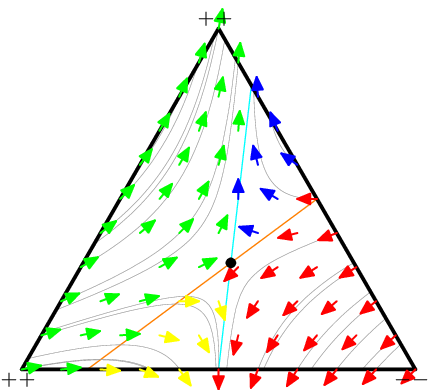}&\includegraphics[width=3cm]{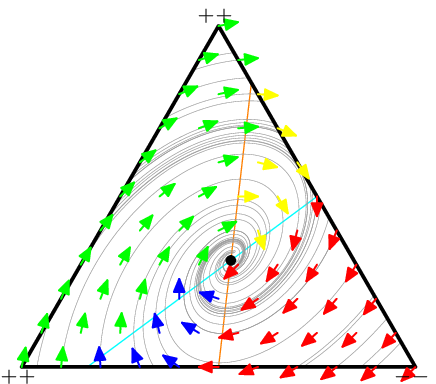}\\
(a)&(b)
\end{tabular}
 \caption{Different types of critical points for the same vertex sign configuration:
 (a) saddle with area sequence $(++,+-,--,-+)$,
 (b) attracting node with area sequence $(++,-+,--,+-)$.}
\label{fig:vertexconfigtriangle}
\end{figure}

We use an edge-based data structure to uniquely describe the
sequence of 0-values of the vector field components on the
boundary of the cell and the area sequence: each cell edge is
given two slots that can be filled with 0-value points of the
components (see Fig.~\ref{fig:walktriangle}). The slots represent
the order of the two components' zero values while walking around
the boundary of a cell. The slots also indicate whether a
component changes from positive to negative or from negative to
positive values as it passes through a 0-value point. There is no
more than one 0-value for each component along a boundary edge of
a cell because of the linearity of the interpolation along cell
edges. Hence each edge can be given two slots and 13 types of
boundary edges can occur, as listed in Table~\ref{tab:edgecolorings}. The notation
$(f_1:+\ra-, f_2:-\ra+)$ denotes that $f_1$ changes from a
positive value to a negative value, $f_2$ changes from a negative
value to a positive value, and the sign change of $f_1$ occurs
before the sign change of $f_2$ as one traverses the edge on the
cell boundary in clockwise direction. If a component is not
listed, it indicates that this component has no sign change on
that edge and thus no 0-value.

\begin{table}
\caption{The thirteen edge colorings.}
 \label{tab:edgecolorings}
\centering
\begin{tabular}{|c|l|}\hline
 Edge type&0-values of components\\\hline\hline
 $\xi_1$&no 0-value, neither for $f_1$ nor for $f_2$\\\hline
 $\xi_2$&$(f_1:+\ra-)$\\\hline
 $\xi_3$&$(f_1:-\ra+)$\\\hline
 $\xi_4$&$(f_2:+\ra-)$\\\hline
 $\xi_5$&$(f_2:-\ra+)$\\\hline
 $\xi_6$&$(f_1:+\ra-,f_2:+\ra-)$\\\hline
 $\xi_7$&$(f_1:+\ra-,f_2:-\ra+)$\\\hline
 $\xi_8$&$(f_1:-\ra+,f_2:+\ra-)$\\\hline
 $\xi_9$&$(f_1:-\ra+,f_2:-\ra+)$\\\hline
 $\xi_{10}$&$(f_2:+\ra-,f_1:+\ra-)$\\\hline
 $\xi_{11}$&$(f_2:+\ra-,f_1:-\ra+)$\\\hline
 $\xi_{12}$&$(f_2:-\ra+,f_1:+\ra-)$\\\hline
 $\xi_{13}$&$(f_2:-\ra+,f_1:-\ra+)$\\\hline
\end{tabular}
\end{table}

For a triangle cell, each of the three edges can now be of types
$\xi_1$--$\xi_{13}$, which represent a coloring of the triangle edges with 13
colors. Therefore, such an edge-coloring can be seen as an
ordered 3-tuple over the set of the 13 colors. Note that not every possible 3-tuple over the set of colors is a \emph{valid coloring} as there exist the following types of \emph{invalid colorings}, i.e.\ colorings that are not (or not uniquely) mappable to a sign configuration of the vector field at the cell vertices:

1) Colorings that have an invalid number of active edges:
       For each component, the cell has to have at least two active edges and the number of active edges has to be even.\\
       Example: the coloring
       $(\xi_2,\xi_4,\xi_1)$ would have one active edge for
       both $f_1$ and $f_2$ and would thus
       be an invalid coloring.

2) Colorings that are not mappable to a sign configuration of the vector
       field in the cell vertices at all.\\ Example: a component, say $f_1$, cannot
       change from $+$ to $-$ twice in a row. Thus, for example,
       a coloring $(\xi_2,\xi_2,\xi_2)$ would be invalid.

Each valid coloring of a triangle can
by construction be uniquely mapped to a sign configuration of the
values of the vector field components on the vertices (see
Fig.~\ref{fig:hierarchydata}). However, the edge-coloring of a triangle
carries more information than just the sign configuration of the
vector field components on the vertices (namely a unique
description of the intersection topology of the 0-level sets with
the cell boundaries) such that several colorings of the triangle
may be mapped to the same sign configuration.

\subsection{Classification}

As discussed in Section~\ref{sec:invariantoperationsareasequence},
there are certain operations that leave the clockwise and
counterclockwise turning behavior of the area sequence invariant.
These operations can be modeled as a group action of a certain
group on the set of area sequences.

Since there exists a bijective map from the set of edge-colorings
of a cell $T$ to the set of area sequences (each coloring
describes exactly one area sequence), one can make use of the
ideas presented in
Section~\ref{sec:invariantoperationsareasequence} to construct
equivalence classes of cell colorings yielding the same area
sequence. Then, all possible edge colorings of a cell can be
constructed and classified by using a group action that builds
equivalence classes of equivalent colorings. Colorings related by
rotation or color flip thus yield the same type of critical point.

\begin{theorem}
 \label{th:classificationbarycentric}
Every configuration of a cell $T$ (as a coloring of the cell) that
results in an intersection of the level sets of the two components
inside of $T$ and thus contains a critical point is topologically
equivalent to one of 8 representative configurations of which four
contain a saddle first-order critical point (Poincar\'{e} index
$-1$) and four contain a non-saddle first-order critical point
(Poincar\'{e} index $+1$). Representatives for the 8 orbits are
listed in Table~\ref{tab:casesbarycentric} and visualized (Online Resource 1). Colorings that are not included in this list
do not contain a critical point.
\end{theorem}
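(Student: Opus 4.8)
The plan is to prove Theorem~\ref{th:classificationbarycentric} by explicitly constructing the group action of the coloring group $G_c = G_s \times G_f$ on the set of valid edge-colorings of a triangle and then enumerating its orbits. First I would set up the counting: the set of valid colorings is the subset of the $13^3$ possible 3-tuples over $\{\xi_1,\dots,\xi_{13}\}$ that survive the two validity constraints (an even number $\geq 2$ of active edges per component, and consistency with a vertex sign configuration) \emph{and} that additionally yield a crossing of the two $0$-level sets inside $T$, i.e.\ an area sequence of type $abab$ or $baba$ rather than $aabb$, $bbaa$, $abba$, $baab$. Using the bijection established in Section~\ref{sec:edgecoloringbarycentric} between valid colorings and area sequences, I would transport the action of $G_c$ on area sequences (defined in Section~\ref{sec:invariantoperationsareasequence}) back to the colorings, taking care that the shape group $G_s \cong C_4$ acting on the four-element area sequence must be realized correctly as the geometric rotational symmetry $C_3$ of the triangle's three edges combined with the induced relabeling of slots.

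The key computational step is then to apply the orbit-counting machinery: since orbits of a group action partition the set, and since Section~\ref{sec:invariantoperationsareasequence} already tells us the turning behavior (hence the Poincar\'e index, by the Theorem of Section~\ref{sec:areasequence}) is an invariant of each orbit, every coloring in a given orbit contains a critical point of the \emph{same} index. I would verify that exactly $8$ orbits arise among the crossing colorings, splitting into $4$ orbits whose common area sequence is counterclockwise (index $-1$, saddle) and $4$ whose area sequence is clockwise (index $+1$, non-saddle), and exhibit one representative per orbit as in Table~\ref{tab:casesbarycentric}. The symmetry under $G_c$ is what collapses the larger raw list of crossing colorings down to these $8$ classes; the index-invariance of each orbit follows directly from the earlier Theorem together with the observation that both $G_s$ (rotation) and $G_f$ (simultaneous sign flip) preserve the clockwise/counterclockwise dichotomy.

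The main obstacle I expect is the bookkeeping of the validity and crossing conditions while simultaneously tracking the $G_c$-action, rather than any deep conceptual difficulty. In particular one must be careful that the slot-ordering encoded in the colors $\xi_6$ through $\xi_{13}$ (which record \emph{which} component's zero is encountered first along an edge) transforms correctly under rotation, since reversing or rotating the traversal can permute the two slots on an edge and thereby swap, say, $\xi_7$ with $\xi_{12}$. I would handle this by delegating the exhaustive enumeration and orbit computation to the computational group theory system GAP, exactly as the paper advertises, and then cross-check that the machine output respects the hand-derived invariant (turning behavior) and that each of the $8$ representatives indeed realizes a genuine $abab$/$baba$ crossing. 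The remaining claim---that colorings outside the list contain no critical point---follows immediately, since those are precisely the non-crossing sequences $aabb$, $bbaa$, $abba$, $baab$ identified in Section~\ref{sec:barycentricsecseq} as having no interior intersection of the level sets.
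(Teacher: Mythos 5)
Your proposal follows essentially the same route as the paper's proof: exhaustively enumerate the $13^3$ colorings, act by the coloring group $G_c=G_s\times G_f$ with $G_s\cong C_3$ the triangle rotations and $G_f$ the simultaneous sign flip, discard invalid orbits on the representative level, and classify the crossing orbits by the turning behavior of the area sequence (hence the Poincar\'{e} index) via a GAP computation, arriving at the same $8$ classes split $4$/$4$ by index. One small correction to your anticipated obstacle: pure rotations in $C_3$ preserve the within-edge slot order, so $\xi_7$ is never swapped with $\xi_{12}$ under the group action---only a reversal of the traversal direction (a reflection) would do that, which is precisely why the paper restricts $G_s$ to the cyclic subgroup of $S_3$ rather than using the full symmetry group, keeping the clockwise/counterclockwise invariant intact.
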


\begin{table}
 \caption{Representatives of the 8 equivalence classes of
 colorings of a triangle cell.}
 \centering
 \begin{tabular}{|c|l|r|}\hline
  Class&Cell coloring&Index\\\hline\hline
  1&$( 1, 6, 9 )$&$-1$\\\hline
  2&$( 1, 7, 8 )$&$+1$\\\hline
  3&$( 1, 10, 13 )$&$+1$\\\hline
  4&$( 1, 11, 12 )$&$-1$\\\hline
  5&$( 2, 4, 9 )$&$-1$\\\hline
  6&$( 2, 5, 8 )$&$+1$\\\hline
  7&$( 2, 11, 5 )$&$-1$\\\hline
  8&$( 2, 13, 4 )$&$+1$\\\hline
\end{tabular}
\label{tab:casesbarycentric}
\end{table}

\begin{proof}
This theorem is proven using a GAP program, see (Online Resource 3).

The basic idea is as follows. Using a combinatoric description,
all possible colorings of a cell are constructed: the set of all
ordered 3-tuples over the set of all possible colors. Then, a
group action on that set is considered to build equivalence
classes of colorings such that all pairs of elements of an equivalence
class can be mapped onto each other by means of
operations leaving the type of critical point invariant, as
described in Section~\ref{sec:invariantoperationsareasequence}.

Let $\mathcal{C}=\{\xi_1,\dots,\xi_{13}\}$ be the set of colors as described
above and let the set of 3-tuples over $\mathcal{C}$,
$\mathcal{T}=\{t=(t_1,\dots,t_3):t_i\in \mathcal{C}\}$ be the set of colored
cells or tiles. For the coloring group $G_c$, which is the direct
product of the shape group $G_s$ and the flip group $G_f$ (see
Section~\ref{sec:invariantoperationsareasequence}), an action of
$G_c$ on the set of all colored tiles $\mathcal{T}$ can be defined.
Here, the shape group $G_s$ is the cyclic group $C_3$ as the
subgroup of rotations of the symmetry group of the combinatoric
triangle (which is given by the symmetric group $S_3$). The color
flip group is chosen to be
$G_f=\langle (2,3)(4,5)(6,9)(7,8)(10,13)(11,12) \rangle$, which is isomorphic to
$S_2$; the numbers in the cycles correspond to the color indices of
edge colors. The choice of generator for $G_f$ is obvious and unique
since simultaneous flipping of signs of the two components results
in an interchange of edges with color $\xi_2$ and $\xi_3$, $\xi_4$
and $\xi_5$, etc.

Using our GAP program, first all
possible colorings of a cell are generated and for each orbit a
representative is checked for the validity of the coloring.  Elimination of invalidly colored cells can be done on a representative level: if
the representative of an orbit is an invalid coloring, all other
elements in the orbit are invalid because they are equivalent
colorings. Thus, one can talk of invalidly colored orbits.

After discarding all invalidly colored orbits, for the remaining
orbits the sequence of 0-value points of the components on the
cell boundary is extracted from the coloring.
Since the 0-level sets of the components are straight lines, the
sequence of 0-value points determine whether the 0-level sets
intersect within the cell (thus yielding a critical point) or not
(no critical point). If the 0-level sets intersect within the cell,
the area sequence and its turning behavior are extracted to
classify the type of critical point as saddle (Poincar\'{e} index
$-$1) or non-saddle (Poincar\'{e} index $+$1). As all critical
points are of first-order, this classification by the Poincar\'{e}
index covers all possible types of critical points. Since all
possible colorings have been considered the list of equivalence
classes is complete. See (Online Resource 1) for a list of all equivalence classes, including sample visualizations.
\end{proof}


\section{Bilinear Case}
 \label{sec:bilinear}

In this section, the concepts developed in the previous section
are transferred to the bilinear case. Most elements can be
immediately adopted, but some caution has to be taken because the
interpolant is no longer linear.

\subsection{Interpolation Scheme and Cells}

Bilinear interpolation is an extension of linear interpolation for
interpolating functions of two variables.
This interpolation scheme is simple, fast to implement, and widely
used in many visualization algorithms working in a cell-wise
fashion on uniform, rectilinear, or curvilinear 2D grids.

A cell $Q$ is represented by an ordered 4-tuple on the points
$\vec{x}_1,\dots,\vec{x}_4\in\R[2]$ with one real-valued 2D vector
(the vector field values) attached to each vertex such that a cell
becomes a 4-tuple
$Q=((\vec{x}_1,\vec{v}_1),\dots,(\vec{x}_4,\vec{v}_4))\subset(\R[2]\times\R[2])^4$.
Mostly, we are only interested in the vector field values, just
writing $Q=(\vec{v}_1,\dots,\vec{v}_4)$.

Any uniform, rectilinear, or convex curvilinear cell can be
transformed to the unit square $[0,1]^2$ in terms of a
diffeomorphism, yielding local Euclidean coordinates $s,t\in[0,1]$
within the cell. Without loss of generality we therefore only consider the case of Euclidean coordinates in the following, i.e. the setting when the interpolation scheme is employed in parameter space. We use the
following notation for bilinear interpolation:

Let $B_i:[0,1]\times[0,1]\ra\R$, $i=1,2$, be bilinear functions for the two
vector components (corresponding to $x$ and $y$), defined by
\begin{equation*}
 B_i(s,t):=(1-s,s)\mattwo{v_{1i}}{v_{2i}}{v_{3i}}{v_{4i}}\vectwo{1-t}{t},
\end{equation*}
where $s,t\in[0,1]\subset\R$ are local coordinates within one cell
and
$\vec{v}_1=({v_{11}},{v_{12}})^T$,\dots,$\vec{v}_4=({v_{41}},{v_{42}})^T$
are the values of the vector field components at the four vertices
$\vec{x}_1,\dots,\vec{x}_4$. Then $\vec{B}=(B_1,B_2)^T$ is called the
\emph{bilinear interpolant} of $\vec{v}_1,\dots,\vec{v}_4$.

\begin{remark}
 \label{rem:bilinearstandardform}
The bilinear functions can be rewritten as
\begin{equation}
B_i(s,t)=a_is+b_it+c_ist+d_i \label{eq:blstandard},
\end{equation}
with $a_i={(v_{3i} - v_{1i})}$, $b_i={(v_{2i}  -v_{1i})}$,\\$c_i={(v_{1i} - v_{2i} - v_{3i} + v_{4i})}$ and $d_i=v_{1i}$.

The following properties of the interpolant will be of importance
throughout the rest of the section:
\begin{enumerate}
 \item The $\omega$-level sets of $B_i$ are hyperbolas, which
       can be seen when writing the interpolants in
       standard form (\ref{eq:blstandard}) and interpreting
       them as conic sections. Some examples of $\omega$-level sets of $B_i$ are depicted in
       Fig.~\ref{fig:blinterpolantsscalar}.
 \item The interpolation is linear in each
       dimension and continuous. Furthermore, the interpolation along
       edges of the cell is linear: analogously to the
       barycentric case in Section~\ref{sec:barycentriccase}, one can
       define $\omega$-active and $\omega$-inactive cell edges.
       There exist four points on a cell boundary with value $\omega$ 
       and thus most four $\omega$-active edges, if all vertex values differ from $\omega$.
 \item In the case $c_i=0$, $B_i(s,t)$ is
       a linear function. If $c_i\neq 0$, $B_i(s,t)$ is
       nonlinear and when looking at the partial derivatives
       $\frac{\partial}{\partial s}B_i(s,t)=c_it+a_i$ and
       $\frac{\partial}{\partial t}B_i(s,t)=c_is+b_i$, one
       sees that $B_i(s,t)$ has an extremal point
       at $S_i(-\frac{b_i}{c_i}|-\frac{a_i}{c_i})$. This
       extremal point is a saddle because
       the Jacobian $DB_i$ is singular at $S_i$ and for all $s,t \in \R$ the Hessian $H_i$ of $B_i$ 
       has a negative determinant (and thus its eigenvalues are of opposite sign; see
       Figs.~\ref{fig:blinterpolantsscalar} and \ref{fig:blreduce} for examples).
\end{enumerate}
\end{remark}

\subsection{Bilinear Interpolation of Scalar Fields}

Let us look at the qualitative behavior of a bilinearly
interpolated scalar fields before extending this to vector fields.
We only consider one $B_i$ here, say $B_1$, as a placeholder for a
scalar field. The qualitative behavior of the interpolant within
the cell can be described in terms of its vertex configuration:

\begin{theorem}
 \label{th:classificationbilinearscalar}
Given $\omega\in\R$ and a cell $Q$ with vertex values $v_{11}$,
$v_{21}$, $v_{31}$ and $v_{41}$, $v_{i1}\neq\omega$ for $1\leq i\leq
4$, four cases for the qualitative behavior of $\omega$-level sets of
the interpolant $B_1(s,t)$ within a cell can occur:
Fig.~\ref{fig:blinterpolantsscalar} summarizes these cases that
depend on the classification of vertex values $v_{i1}$ being greater
or less than $\omega$. Here, active and inactive edges for a cell
$Q$ are defined in the same way as for the barycentric interpolant
in Definition~\ref{def:activeedgebarycentric}. Please note that we
use the terminology {\em saddle cell} (see
Fig.~\ref{fig:blinterpolantsscalar}) to describe a configuration of
vertex values $v_{i1}$, which is different from the classification
of a critical point as a saddle point. In the following, we denote
the first always by the compound term {\em saddle cell}.
\end{theorem}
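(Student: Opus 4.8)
The plan is to reduce the classification to a purely combinatorial enumeration of the sign patterns of $v_{i1}-\omega$ at the four vertices, and then to read off the level-set topology in each resulting case from the analytic properties of the interpolant collected in Remark~\ref{rem:bilinearstandardform}. The guiding observation is that all qualitative information about how $\{B_1=\omega\}$ meets the cell is already encoded in which edges are $\omega$-active.

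First I would note that, since $B_1$ is linear along each edge (Remark~\ref{rem:bilinearstandardform}, property 2), an edge is $\omega$-active precisely when its two endpoint values lie on opposite sides of $\omega$, and it then carries exactly one $\omega$-crossing (Definition~\ref{def:activeedgebarycentric}). Hence the set of active edges is completely determined by the sign vector $(\operatorname{sgn}(v_{i1}-\omega))_{i=1}^4\in\{+,-\}^4$. Next I would enumerate these $16$ sign vectors modulo the symmetries that preserve the qualitative picture: the eight symmetries of the square, which permute the vertices, together with the global exchange of the two sides of $\omega$, which leaves the set $\{B_1=\omega\}$ itself unchanged. A short orbit count leaves exactly four classes, which are conveniently described by their active-edge configuration: no active edge (all four vertices on the same side); two \emph{adjacent} active edges (a $3$--$1$ split, the two active edges sharing the lone vertex); two \emph{opposite} active edges (a $2$--$2$ split with equal-signed vertices adjacent); and all four edges active (the $2$--$2$ diagonal split, i.e.\ the \emph{saddle cell}). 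These are the four cases of Fig.~\ref{fig:blinterpolantsscalar}.

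With the combinatorics fixed, I would determine the level set in each class using the fact that, being linear in $s$ for fixed $t$ and in $t$ for fixed $s$, $B_1$ attains its extrema over $[0,1]^2$ at the vertices. In the all-equal case $B_1-\omega$ therefore keeps a constant sign on the whole cell and the level set is empty inside it. In both two-active-edge classes there are exactly two boundary crossings; since the $\omega$-level set is a hyperbola (Remark~\ref{rem:bilinearstandardform}, property 1), it has no closed loops and no second interior branch, so the crossings must be joined by a single arc. That arc either cuts off the lone corner (the $3$--$1$ case, adjacent active edges) or runs from one edge to the opposite one, separating the two equal-signed pairs (the $2$--$2$ adjacent case, opposite active edges).

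The delicate case, and the one I expect to be the main obstacle, is the saddle cell. There all four edges are active, the boundary carries four $\omega$-crossings, and the interior level set consists of two arcs whose pairing must be pinned down. Here I would invoke property 3 of Remark~\ref{rem:bilinearstandardform}: the unique critical point $S_1$ of $B_1$ is a genuine saddle, and a short computation of $-b_1/c_1$ and $-a_1/c_1$ from the standard form (\ref{eq:blstandard}) shows that in the diagonal configuration $S_1$ lies in the open cell. The two hyperbola branches then curl around the two corners lying on the side of $\omega$ opposite to the saddle value $B_1(S_1)$, so that the pairing of the four boundary crossings into two arcs is governed by the sign of $B_1(S_1)-\omega$. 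I would treat the degenerate sub-case $B_1(S_1)=\omega$ separately, where the two branches meet at $S_1$ and the curve degenerates to the two crossing asymptote segments; handling it as a limiting configuration yields the two admissible pairings that constitute the saddle-cell ambiguity. Verifying that $S_1$ indeed lies inside the cell in the diagonal case, and that the branch orientation is dictated by $\operatorname{sgn}(B_1(S_1)-\omega)$, is the technical heart of the argument; the remaining three cases then follow directly from the corner-extremum property and the absence of closed hyperbola loops.
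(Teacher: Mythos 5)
Your proof is correct, but it is far more detailed than what the paper actually does: the paper states this theorem with only a one-sentence justification (``All of the statements follow immediately from the continuity of the interpolant and its linearity along cell boundaries'') and defers the saddle-cell subtlety to the later section on saddle cells, where the pairing ambiguity is resolved by citing the asymptotic decider~\cite{gregory91tad}. You instead give a self-contained argument: the explicit orbit count of the $16$ sign vectors under the square's symmetries and the sign flip (yielding exactly the four classes), the corner-extremum property to rule out interior level-set points in the inactive case, the no-closed-loop/unboundedness property of hyperbola branches to force a single arc in the two two-crossing classes, and a direct verification that the saddle $S_1(-b_1/c_1\,|\,-a_1/c_1)$ lies in the open cell in the diagonal configuration with the pairing governed by $\operatorname{sgn}(B_1(S_1)-\omega)$ --- which is precisely the asymptotic-decider criterion, here derived rather than cited. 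What your route buys is rigor and the explicit decider formula (one computes $B_1(S_1)=(c_1d_1-a_1b_1)/c_1=(v_{11}v_{41}-v_{21}v_{31})/c_1$ in the standard form); what the paper's terse route buys is brevity, since it treats the figure as the content and postpones the saddle-cell analysis to where it is needed. Two small points you should make explicit: in the saddle-cell case $c_1\neq 0$ holds automatically (a function linear in $(s,t)$ cannot realize the diagonal sign pattern, as summing the two negative vertex values contradicts the two positive ones), so $S_1$ is well defined exactly when you need it; and in the non-saddle active cases the level set may degenerate to a straight line when $c_1=0$, which your single-arc argument still covers since a line also has no closed loops and meets each edge at most once.
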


All of the statements follow immediately from the continuity of the
interpolant and its linearity along cell boundaries.

\begin{figure}
 \centering
\includegraphics[width=2.5in]{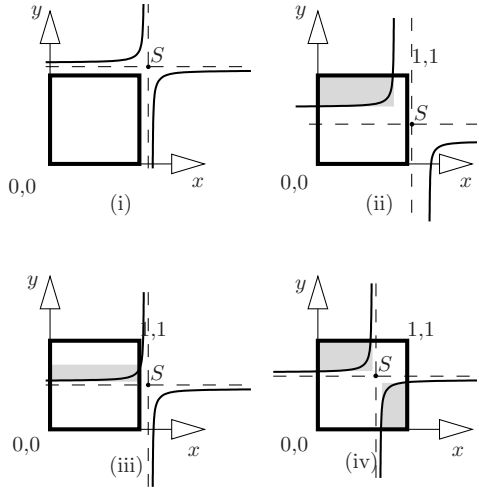}
 \caption{Examples of the four qualitatively distinct configurations of 0-level sets of bilinearly interpolated scalar fields: (i) inactive cell, (ii) single active cell, (iii) couple active cell, (iv) saddle cell. Bounding boxes (see Lemma \ref{lem:boundingbox}) are shown in gray.}
 \label{fig:blinterpolantsscalar}
\end{figure}

\begin{remark}
For an easier notation of the different cell types in terms of the
definitions of Theorem~\ref{th:classificationbilinearscalar}, we
write a cell $Q=(\vec{v}_1,\vec{v}_2,\vec{v}_3,\vec{v}_4)$ as a 4-tuple
over the set $\{+,-\}$ where the $i$-th entry of that tuple is set
to $+$ if $\vec{v}_i>\omega$ and to $-$ otherwise.
Table~\ref{tab:notationactivecellsbilinear} summarizes this tuple
notation, referred to as \emph{vertex sign configuration}, as in the barycentric case.
 \label{rem:notationactivecellsbilinear}
\end{remark}

\begin{table}
 \caption{Different cell types by vertex sign configuration.}
 \label{tab:notationactivecellsbilinear}
\centering
\begin{tabular}{|l|l|}\hline
\textbf{Cell type}&\textbf{Vertex sign configuration}\\\hline\hline
 inactive&$( -, -, -, - ), ( +, +, +, + )$\\\hline
 single active&$( -, -, -, + ), ( +, -, -, - ), ( +, +, +, - ),$\\
&$( -, +, -, - ), ( -, -, +, - ), ( -, +, +, + ),$\\
&$( +, -, +, + ), ( +, +, -, +)$\\\hline
 double active&$( -, -, +, + ), ( +, -, -, + ), ( +, +, -, - ),$\\
&$( -, +, +, - )$\\\hline
 saddle cell&$( -, +, -, + ), ( +, -, +, - )$\\\hline
\end{tabular}
\end{table}

\begin{lemma}[Bounding-box lemma]
\label{lem:boundingbox} Let $B_i(s,t)$ be a bilinear interpolant
on a rectangular grid cell $Q$ and let the cell be $0$-active. For
each curve of the $0$-level set connecting two $0$-value points
$A(A_x|A_y)$, $B(B_x|B_y)$ on the boundary of the cell, the
Cartesian product
\begin{equation*}
\begin{array}{@{}lll@{}}
L&=&[\min\{A_x,B_x\},\max\{A_x,B_x\}]\\
&&\times[\min\{A_y,B_y\},\max\{A_y,B_y\}]
\end{array}
\end{equation*}
defines a bounding box: the curve of the $0$-level set of $B_i$ is
inside $L$ except at the $0$-value points at the border of the cell. See Fig.~\ref{fig:blinterpolantsscalar} for an illustration of the set $L$ for some sample configurations.
\end{lemma}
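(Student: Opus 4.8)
The plan is to reduce the bounding-box claim to a single monotonicity fact: along the $0$-level set of a bilinear interpolant, one local coordinate is a strictly monotonic function of the other, and a monotonic arc joining $A$ and $B$ necessarily lies in the axis-aligned box spanned by its endpoints. Throughout I write the endpoints as $A=(A_x,A_y)$ and $B=(B_x,B_y)$ in the local $(s,t)$ coordinates of the cell.

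First I would dispose of the linear case $c_i=0$ from the standard form (\ref{eq:blstandard}): here $B_i(s,t)=a_is+b_it+d_i$ is affine, its $0$-level set is a straight line, and the arc joining $A$ and $B$ is a line segment, which is trivially contained in the bounding box $L$ of its two endpoints.

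For $c_i\neq 0$ I would solve $a_is+b_it+c_ist+d_i=0$ for $t$, obtaining $t=g(s)=-\frac{a_is+d_i}{c_is+b_i}$ wherever $c_is+b_i\neq 0$. Differentiating gives $g'(s)=-\frac{a_ib_i-c_id_i}{(c_is+b_i)^2}$, so $g'$ has constant sign and $g$ is strictly monotonic on each side of the vertical asymptote $s=-b_i/c_i$. The factored form $B_i=c_i\bigl(s+\tfrac{b_i}{c_i}\bigr)\bigl(t+\tfrac{a_i}{c_i}\bigr)+\bigl(d_i-\tfrac{a_ib_i}{c_i}\bigr)$ shows that $B_i$ is constant along each asymptote; hence in the non-degenerate case $a_ib_i-c_id_i\neq 0$ the level set never meets its asymptotes, so any connected arc of $\{B_i=0\}$ inside the bounded cell lies on a single branch and is exactly the graph of the strictly monotonic function $g$. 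In the degenerate case $a_ib_i-c_id_i=0$ the conic collapses onto its two asymptotes $s=-b_i/c_i$ and $t=-a_i/c_i$, which are axis-parallel lines and so are again trivially confined to the bounding box of any sub-segment.

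Finally I would conclude: since $g$ is strictly monotonic and continuous, the $s$-projection of the arc is an interval whose endpoints are the $s$-values of $A$ and $B$, so $s$ ranges over $[\min\{A_x,B_x\},\max\{A_x,B_x\}]$ and, by monotonicity, $t$ over $[\min\{A_y,B_y\},\max\{A_y,B_y\}]$; thus every point of the arc lies in $L$. Because $g$ has neither horizontal nor vertical tangents, the arc cannot turn back to touch $\partial Q$ at a third point, so its only contacts with the cell boundary are the $0$-value points $A$ and $B$. I expect the only real care to be needed in the book-keeping around the asymptotes---checking that the level curve cannot cross the vertical asymptote inside the cell, so that one-branch monotonicity applies to the entire arc, and separating out the degenerate conic; once this is handled the inclusion in $L$ is immediate.
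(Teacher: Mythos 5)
Your proof is correct, but it takes a genuinely different route from the paper's. You argue analytically: writing the $0$-level set as the graph of the rational function $t=g(s)=-(a_is+d_i)/(c_is+b_i)$, noting that $g'(s)=-(a_ib_i-c_id_i)/(c_is+b_i)^2$ has constant sign, and using the factored form of $B_i$ to show that in the non-degenerate case the level set never meets its asymptotes, so the arc joining $A$ and $B$ is the graph of a strictly monotonic function over $[\min\{A_x,B_x\},\max\{A_x,B_x\}]$ and hence lies in $L$. The paper instead sketches a structural argument: since the restriction of a bilinear function to any axis-aligned subrectangle is again bilinear, it restricts $B_i$ to $L$ itself, reads the restriction as a bilinear interpolant $\hat{B}_i$ of the corner values $P_1,\dots,P_4$, and argues via continuity and linearity of $\hat{B}_i$ along the edges of $L$ that the boundary of $L$ carries no zeros other than the two corners $A$ and $B$, so no $0$-isoline can cross out of $L$; this reuses the vertex-sign case analysis of Theorem~\ref{th:classificationbilinearscalar} and avoids coordinates, but the paper explicitly leaves it at sketch level. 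Your version buys completeness and self-containment---in particular it yields the sharper conclusion that the open arc lies in the interior of $L$ (matching the ``except at the $0$-value points'' clause), and it explicitly disposes of the linear case $c_i=0$ and the degenerate conic $a_ib_i=c_id_i$, which the paper's sketch does not address. One small book-keeping remark: in the degenerate case the two axis-parallel lines may cross inside $Q$, so an arc connecting two boundary zeros can be L-shaped, passing through the crossing point; such a path is still (weakly) monotone in both coordinates and hence still contained in the box spanned by its endpoints, so your conclusion survives, but your phrase ``confined to the bounding box of any sub-segment'' should be read as covering this concatenated case as well.
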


\begin{proof}
We only give a sketch of the proof. The basic idea is the following.

Let the bounding box $L$ be given as the convex hull of the four points
\begin{equation*}
\begin{array}{l}
P_1(\min\{A_x,B_x\} | \min\{A_y,B_y\}),\\
P_2(\min\{A_x,B_x\} | \max\{A_y,B_y\}),\\
P_3(\max\{A_x,B_x\} | \max\{A_y,B_y\}),\\
P_4(\max\{A_x,B_x\} | \min\{A_y,B_y\}).
\end{array}
\end{equation*}
Note that $A,B\in L$. Then, the bilinear interpolant $B_i(s,t)$ restricted to the
bounding box $L$ can be expressed in terms of another bilinear
interpolant $\hat{B}_i(\hat{s},\hat{t})$ defined on $L$ and
interpolating the values of the points $P_1,\dots,P_4$. To leave
$L$, a 0-isoline of $\hat{B}_i$ has to cross the boundary of $L$.
Thus, a crossing 0-isoline can only occur where the value of
$\hat{B}_i$ on the boundary of $L$ is 0. One can then show that
apart from the two 0-values in the corner points $A$ and $B$ of a
$L$, there exist no other boundary points of $L$ with 0-value.
Therefore, such a crossing cannot occur and the 0-level set of
$\hat{B}_i$ completely lies inside $L$ and cannot leave $L$.

Since the interpolation function $\hat{B}_i$ along the boundary of
$L$ is continuous and linear, it is sufficient to show that only
two points in the set $L$ have 0-value. It then follows
immediately that these are the only points on the boundary with
zero value. This can be shown easily for a representative of each
of the cases listed in
Theorem~\ref{th:classificationbilinearscalar} and
Fig.~\ref{fig:blinterpolantsscalar}.
\end{proof}

\subsection{Analytical Computation of the Position of Critical Points}
 \label{sec:blinterpolant}

The position of critical points of the bilinear interpolant
$\vec{B}=(B_1,B_2)$ can be computed as intersection points of
0-level sets of $B_1$ and $B_2$ by solving
\begin{equation}
\vec{B}(s,t)=\vectwo{B_1(s,t)}{B_2(s,t)}=\vectwo{a_1s+b_1t+c_1st+d_1}{a_2s+b_2t+c_2st+d_2}=\vec{0}.
\label{eq:interpolantanalyticintersect}
\end{equation}
Since both $B_1(s,t)=0$ and $B_2(s,t)=0$ are linear or quadratic
equations in $s,t$, the system can either be combined into a
single linear or quadratic equation in $s$ or $t$. We will refer
to this equation as the \emph{intersection equation}. Since the
linear case occurring here can be treated in exactly the same way
as the barycentric case described in
Section~\ref{sec:barycentriccase}, this degenerate case will be
excluded in the following and we will only speak of the
(quadratic) intersection equation with discriminant $\Delta$. The
equation can either have no real solution ($\Delta<0$), two
different real solutions ($\Delta>0$), or a double real solution
($\Delta=0$), and the number of solutions of the intersection
equation gives the number of critical points, see
Fig.~\ref{fig:hyperbolatouching}.

Since the sign of the discriminant $\Delta$ determines whether the
vector field has no, one, or two critical points, it can be
interpreted as a bifurcation parameter of a dynamical system,
where $\Delta=0$ is the bifurcation value. Also $\Delta$ can be
interpreted as a measure for the stability of the critical
point---critical points with $\Delta=0$ are generally not stable,
whereas the case with two ($\Delta>0$) or no intersection points
($\Delta<0$) can be more or less stable according to the magnitude
of $|\Delta|$.

Given a rectangular cell with the vector field values\\
$\vv_1,\dots,\vv_4$ 
one can write the components of the bilinear interpolant
in normal form like done in (\ref{eq:interpolantanalyticintersect}) (see Remark~\ref{rem:bilinearstandardform}).

The intersection points of the 0-level sets of the components are
obtained by solving (\ref{eq:interpolantanalyticintersect}).
First, we determine which of the interpolants is linear in $s$ and
$t$ (i.e.\ for which $c_i=0$); if both are linear,
(\ref{eq:interpolantanalyticintersect}) becomes a linear system
and can be solved separately yielding one or no solution (and not
infinitely many solutions as the singularities are restricted to
be isolated). If one of the interpolants is linear and the other
nonlinear, the linear one can be solved for $s$ or $t$ and this
can be plugged into the other to yield a quadratic equation in $s$
or $t$. If both interpolants are nonlinear, one can be solved for
$s$ or $t$ and plugged into the other.

All these equations have at most two real solutions yielding one
coordinate of the 0-value points of the interpolated vector field.
The other coordinate can be obtained through the linear or
quadratic equation between $s$ and $t$ from the first step. A
solution within the cell is found by restricting the coordinates
to be in $[0,1]^2$. Note that the vector field values for each
component are always restricted to be nonzero at cell vertices and
that the two components are assumed not to be exactly the same.

\subsection{Level Sets and Sector Sequences}

Each solution of (\ref{eq:interpolantanalyticintersect}) lies in
the intersection set of the 0-level sets of $B_1$ and $B_2$. The
following cases can occur when looking at the intersections of
level sets of $B_1$ and $B_2$:

\begin{theorem}
\label{th:intersectionshyperbolas} Let $c_1$, $c_2$ be the
0-level sets of $B_1$, $B_2$ and let $\Delta\in\R$ be the
discriminant of the intersection equation for a cell $Q$. Then, the
following cases can occur:
\begin{enumerate}
 \item $\Delta<0$ $\Lra$ $c_1$, $c_2$ are disjoint and there is no
 critical point (see Fig.~\ref{fig:hyperbolatouching}(a)),
 \item $\Delta=0$ $\Lra$ $c_1$, $c_2$ touch in one point
 (see Fig.~\ref{fig:hyperbolatouching}(b)),
 \item $\Delta>0$ $\Lra$ $c_1$, $c_2$ intersect in two points
 (see Fig.~\ref{fig:hyperbolatouching}(c)).
\end{enumerate}
\end{theorem}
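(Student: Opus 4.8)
The plan is to reduce the geometric question about the intersection of the two hyperbolic level sets $c_1$ and $c_2$ to an algebraic question about the real roots of the (quadratic) intersection equation, and then to invoke the elementary trichotomy of the quadratic discriminant. Concretely, I would start from the system~(\ref{eq:interpolantanalyticintersect}) and carry out the elimination already sketched in Section~\ref{sec:blinterpolant}. Since the purely linear case has been excluded, at least one of the cross-term coefficients (say the one belonging to $B_2$) is nonzero, so $B_2(s,t)=0$ can be solved rationally for $t$ as a function of $s$, and substituting into $B_1(s,t)=0$ and clearing the denominator yields a single polynomial equation in $s$. A short expansion shows this is exactly the quadratic intersection equation whose discriminant is $\Delta$; the fact that the $st$ cross terms cancel in the elimination is what guarantees the degree is at most two, so that at most two intersection points arise even though two conics could a priori meet in more points. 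The standing assumptions that the vertex values are nonzero and that the two components are not identical ensure that this equation is not degenerate, so the common zeros stay isolated.

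Next I would record the standard fact that a real quadratic has no real root when $\Delta<0$, a single (double) real root when $\Delta=0$, and two distinct real roots when $\Delta>0$. To transfer this count to the number of actual intersection points of $c_1$ and $c_2$, I would verify that the elimination sets up a one-to-one correspondence between real roots $s^\ast$ of the intersection equation and points $(s^\ast,t^\ast)\in c_1\cap c_2$: each real root produces a unique $t^\ast$ by back-substitution, and conversely every common zero of $B_1$ and $B_2$ projects to a root of the intersection equation. This yields the three forward implications $\Delta<0\Rightarrow$ no intersection, $\Delta=0\Rightarrow$ one intersection, $\Delta>0\Rightarrow$ two intersections. Since the three sign cases for $\Delta$ are mutually exclusive and exhaustive, and the resulting counts $0,1,2$ are distinct, each reverse implication follows automatically, establishing the biconditionals. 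Finally, because a critical point of $\vec{B}$ is by definition a common zero of $B_1$ and $B_2$, the number of intersection points equals the number of critical points, so in the case $\Delta<0$ there is indeed no critical point.

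The delicate step, and the one I would treat most carefully, is making the root-to-point correspondence genuinely bijective. Two issues have to be controlled. First, the rational substitution obtained by solving $B_2(s,t)=0$ for $t$ has a pole where its denominator $b_2+c_2 s$ (with coefficients from the standard form~(\ref{eq:blstandard})) vanishes; I would check that such an $s$ either fails to satisfy the cleared quadratic or corresponds to a branch of the level set escaping to infinity rather than to a finite common zero, so that clearing the denominator neither creates nor destroys affine solutions. Second, I would interpret the double root at $\Delta=0$ geometrically: because $c_1$ and $c_2$ are hyperbolas (Remark~\ref{rem:bilinearstandardform}), a double root of the eliminant means the two curves are tangent, i.e.\ they touch in one point rather than crossing transversally, which is exactly the ``touching'' described in case~2 and visualized in Fig.~\ref{fig:hyperbolatouching}(b). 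The remaining transversal case $\Delta>0$ then yields two genuine crossings, consistent with the two-critical-point configuration shown in Fig.~\ref{fig:hyperbolatouching}(c).
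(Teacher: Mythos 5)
Your proposal is correct in its main thrust but takes a genuinely different route from the paper. The paper never performs the elimination explicitly or sets up a root--point correspondence; it simply takes for granted that real solutions of the intersection equation correspond to intersection points (dismissing cases~1 and~2 as ``clear'') and spends its entire effort on a \emph{perturbation argument} for case~3: since each pair of hyperbola branches can be perturbed independently, a hypothetical double-touching configuration could be perturbed into three or more intersections, contradicting the degree-two bound on the intersection equation; hence $\Delta=0$ can only correspond to a single touching point, and $\Delta>0$ to two transversal crossings. You instead make the algebra carry the load: explicit rational elimination of $t$, an expansion showing the eliminant has degree at most two, a claimed bijection between real roots and points of $c_1\cap c_2$, and an intersection-multiplicity reading of the double root as tangency, with the biconditionals recovered from the trichotomy of $\operatorname{sgn}\Delta$. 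Your version is more self-contained and rigorous about the count $0/1/2$ and makes the converse implications explicit rather than asserted; the paper's perturbation sketch, by contrast, disposes of the double-tangency scenario with essentially no computation and connects naturally to the bifurcation interpretation of $\Delta$ given just before the theorem, but it is informal (``small perturbation'' and the independence of branch perturbations are not justified) and silently assumes exactly the root--point correspondence you try to prove.

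One caveat on the step you yourself flagged as delicate: your claim that clearing the denominator ``neither creates nor destroys affine solutions'' fails in the non-generic case where both hyperbolas share the same vertical asymptote abscissa, i.e.\ $b_1+c_1s^\ast=b_2+c_2s^\ast=0$ in the notation of~(\ref{eq:blstandard}). For instance, $B_1=st-\tfrac12$ and $B_2=st-s-\tfrac12$ have all vertex values nonzero, the eliminant is $s^2$ with $\Delta=0$, yet $c_1\cap c_2=\emptyset$: the double root at $s^\ast=0$ is spurious, the ``intersection'' having escaped to infinity, so this is not a touching as in Fig.~\ref{fig:hyperbolatouching}(b). Your bijection therefore needs this shared-asymptote configuration excluded or treated separately. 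Note, however, that the paper's own proof is equally silent on it (its ``clearly'' for cases~1 and~2 breaks on the same example), so this is a shared blemish of both arguments rather than a defect unique to yours.
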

\begin{proof}
Clearly one has: $c_1$ and $c_2$ are disjoint $\Lra$ $\Delta<0$ which proves 1). The same holds for 2). Now
it remains to show 2), i.e.\ that $c_1$ and $c_2$ cannot touch twice and that the intersection equation has a double solution iff $c_1$ and $c_2$ touch in one point.

Suppose that $c_1$ and $c_2$ touch in one point as shown in
Fig.~\ref{fig:hyperbolatouching}(b).
A small perturbation of the vector field values now either results
in a case where the hyperbola branches do not intersect (i.e.\
$\Delta<0$) as shown in Fig.~\ref{fig:hyperbolatouching}(a), or it
yields a case where $c_1$ and $c_2$ intersect in two points, i.e.
where the intersection equation has two different solutions as shown
in Fig.~\ref{fig:hyperbolatouching}(c). As each pair of branches can
be perturbed separately a double touching case cannot exist as this
could be perturbed to yield a triple intersection of the branches
which is a contradiction to the maximum number of real solutions of
the intersection equation. This proves 3).
\end{proof}

\begin{figure}
\centering
\includegraphics[width=\linewidth]{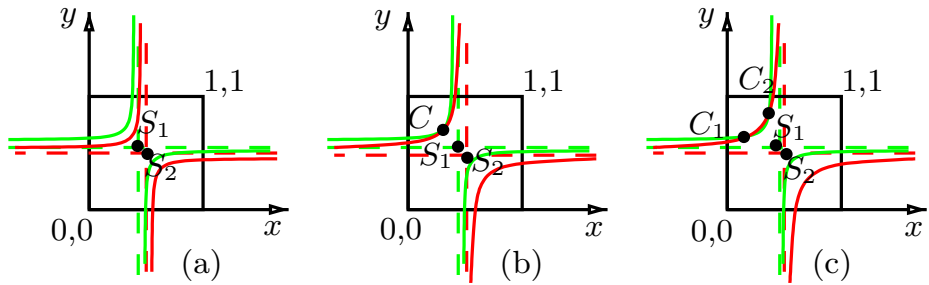}
\caption{
   Intersection cases for the 0-level sets (red and green lines) of the two vector field components: (a) no intersection with $\Delta<0$, (b) one touching point $C$ with $\Delta=0$, and (c) two intersection points $C_1$ and $C_2$ with $\Delta>0$. The saddle points $S_i$ $(i=1,2)$ of the two interpolants $B_i$ and their asymptotes (dashed lines) are depicted in red for $B_1$ and green for $B_2$.
} \label{fig:hyperbolatouching}
\end{figure}

\subsection{Critical Points and Area Sequences}

Let us now look at the types of critical points yielded for the
different intersection cases for the 0-level sets of the components
as investigated in Theorem~\ref{th:intersectionshyperbolas}.

1) The case $\Delta < 0$, where the 0-level sets do not intersect, is
trivial.

2) For the case $\Delta > 0$, when the 0-level sets of the components
intersect twice, yielding two first-order isolated critical points
of the interpolated vector field of first
order as none of the directional derivatives is $\vec{0}$ in the
intersection points. For each critical point separately, the same
things hold as for critical points in the barycentric case; in
particular we can look at area sequences, turning behavior of those,
etc. for each critical point separately.

\begin{remark}
The area sequences of the two
critical points always share three characteristic areas and these
are traversed in opposite directions for each of the points (i.e.\
clockwise turning for an area pair becomes a counterclockwise
turning and vice versa). Therefore, one of the two points has a
Poincar\'{e} index of $-1$ and the other one has a Poincar\'{e}
index of $+1$. Another way to see this is the fact that to first observe that there are at most two critical points inside a cell and that the Poincar\'{e}
index of a cell can either be $-1$, $0$ or $+1$. Now, by virtue of
the summation theorem for the Poincar\'{e} index, two critical points inside a cell of the same index would force the cell to have an index of $\pm 2$, a contradiction.
\end{remark}

3) For the touching case ($\Delta=0$) of the 0-level sets, the tools
developed in the previous sections cannot be applied: when
linearizing the field around the touching point, the 0-level sets of
the two components are tangential to each other and the Jacobian has
at least one zero eigenvalue. Thus, the critical point is not of
first-order and a well-defined area sequence does not exist for this
case. However, a perturbation of the field yields two critical
points, one of index $+1$ and one of index $-1$. Then, Poincar\'{e}
index theory states that the one critical point yielded by the
touching of the 0-level sets has an index equal to the sum of the
indices of the two critical points it splits into, i.e.\ the
second-order critical point from a touching of the 0-level sets has
index 0.

\subsection{Edge Coloring and Invariant Operations on the Colorings}

As the bilinear interpolant is linear along the cell edges, one can
employ the same ideas as in the barycentric case in order to
determine existence and types of critical points inside a cell,
again using an edge-based ``slot'' data structure. Using a data
structure with two slots for each edge of a cell, one can see this
as a cell coloring problem with a coloring of a cell represented as
a 4-tuple over the set of 13 colors like described in
Section~\ref{sec:edgecoloringbarycentric}.

Investigating the intersection topology of the 0-le\-vel sets becomes
more complicated because the 0-level sets are no longer straight
lines. Using the same notation as for the barycentric case, a
sequence of 0-value points $aabb$ (where again $a$ stands for a
0-value point of the first component of the vector field and $b$
for a 0-value of the second component) not necessarily has no
intersections. Here, it can either yield a case where the isolines
have no point in common, touch in one point, or intersect twice as
shown in Fig.~\ref{fig:intersectioncasesaabb}.

The question is whether for a sequence of 0-value points of the form
$\dots aa\dots$ or $\dots bb\dots$ the subsequence $aa$ or $bb$ can
yield a touching or intersection of the 0-level sets or not. If not,
the part $aa$ or $bb$ of the sequence is irrelevant for the
intersection topology and can be ''collapsed'', e.g.\ a sequence
$baababab$ for which $aa$ cannot yield a touching of the isolines
could be reduced to $bbabab$. If now $bb$ would yield no
intersection we could reduce once more to yield $abab$, a sequence
which cannot be collapsed any more.

For a fully reduced sequence, we can either determine the number of
crossings and thus critical points inside the cell as done in the
barycentric case (for sequences of the form $(ab)^n$, $n\in\N_0$) or
we know that the case is \emph{va\-lue-de\-pen\-dent}, i.e. the number of
intersections of the isolines cannot be determined just by looking
at the topology of the 0-value points on the boundary of the cell.
The latter is the case when the fully reduced sequence contains a
subsequence $aa$ or $bb$.

The question remains: how can one determine which subsequences of
type $aa$ or $bb$ can be collapsed? Let us first focus on non-saddle
cells. The bounding box lemma (Lem\-ma~\ref{lem:boundingbox}) gives us
valuable information for the case of single or double active cells.
It states that cases $aabb$ with single active cells for both
components lying at opposite vertices of one edge cannot intersect,
whereas all other configurations with single and double active cells
for the two components---although having no intersection in the
topology of 0-value points on the boundary---can be perturbed to
yield a touching or an intersection of the isolines and thus are
value-dependent.

\subsection{Saddle Cells}
 \label{sec:saddlecases}

There are several subtleties that have to be considered when dealing
with saddle cells. For one or both components of the vector field,
both branches of the 0-level sets intersect with the cell. The
topology of the 0-level sets inside the cell is not uniquely
determined by the sequence of 0-value points of the components on
the cell boundary. This ambiguity can be resolved by adopting the
asymptotic decider~\cite{gregory91tad}.

To determine the intersection topology for the 0-le\-vel\-sets inside
the cell, one can sort the 0-value points by their local coordinates
inside the cell in ascending (or descending) order. Then, for each
component, the first and second and the third and fourth point in
the sorted sequence of 0-value points belong pairwise to the same
branch of the 0-level set.

To model this behavior, the previously used edge-based data
structure is insufficient because the asymptotes imply a ``linking''
of two opposite edges of a cell. Therefore, the data structure needs
to be extended to describe the position of the 0-value points in the
sequence sorted by local coordinates. A natural choice for this is a
{\em double-edge} data structure shown in
Fig.~\ref{fig:hierarchydata}(a), which describes a cell by a pair of
double edges with four slots for each edge. When filling the slots
with 0-value points of the components, their position in the sorted
sequence for each coordinate is the index in the respective double
edge if one arranges the slots as in
Fig.~\ref{fig:hierarchydata}(a).

Each valid double-edge configuration can be uniquely mapped to a
valid edge coloring configuration via a natural projection $\pi_1$ (every valid double-edge configuration can be uniquely mapped to a edge coloring of a cell), which in turn can be mapped to a vertex sign configuration via a natural projection
$\pi_2$ (every valid edge coloring can be uniquely mapped to a vertex sign configuration of a cell), as described before for the barycentric case.
Figure~\ref{fig:hierarchydata} illustrates all three levels of
representation.

Using the double-edge data structure, one can now construct all
possible double-edge configurations for the case of a saddle cell
and check for the intersection topology of the 0-level sets of the
interpolant. As we will see in the following, the asymptotic decider
plays no role in the types of critical points yielded inside a cell,
i.e.\ there is no need to pay attention to how the 0-value point
sequence is reduced when determining the intersection topology and
area sequence for saddle cells.

\begin{figure}
 \centering
 \includegraphics[width=\linewidth]{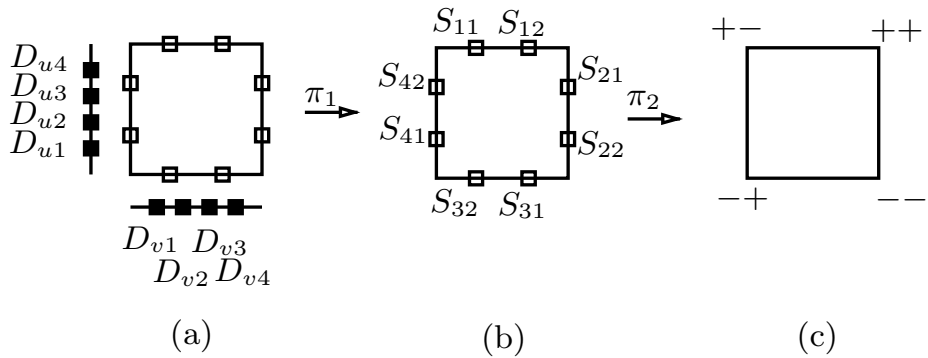}
 \vspace*{-4ex}
 \caption{Hierarchy of data structures to describe the
          sequence of 0-value points on the cell boundary from fine to
          coarse:
          (a) double-edge data structure consisting of two double edges
          with four slots, which can be projected via $\pi_1$ to (b) the edge data
          structure consisting of four edges with two slots each that
          can finally be projected via $\pi_2$ to (c) the vertex sign configuration of the components. }
          \label{fig:hierarchydata}
\end{figure}

An open question is how the $0$-value sequence can be reduced in a
valid way as it was done for non-saddle cells before. The tools
provided by the bounding box lemma (Lem\-ma~\ref{lem:boundingbox}) and
Theorem~\ref{th:intersectionshyperbolas} are instrumental here.
Figure~\ref{fig:blreduce} shows models of the two cases that contain
reducible subsequences of type $aa$ or $bb$ in their sequences of
0-value points on the boundary. All other configurations can yield a
touching or double intersection of the branches for the same
boundary topology and thus are value-dependent (by virtue of
Lem\-ma~\ref{lem:boundingbox} and
Theorem~\ref{th:intersectionshyperbolas}).

\begin{figure}
\includegraphics{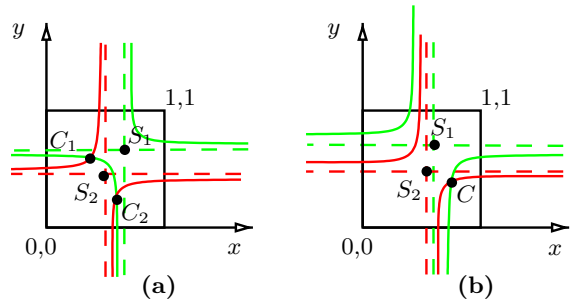}
 \centering
  \vspace*{-1ex}
 \caption{Examples of topologies of 0-value points of the
          components on the boundaries that can be reduced: (a)
          $baababba$,
          which can be reduced by collapsing the branch in the top right which
          cannot touch or intersect with other branches (bounding box lemma)
          to $bbabba$, (b) $abababba$, which can be reduced to $abab$ as the
          two branches in the top left cannot touch or intersect with others
          (bounding box lemma) or themselves (Theorem~\ref{th:intersectionshyperbolas}).
          Again, the saddle points $S_i$ $(i=1,2)$ of the two interpolants
          $B_i$ and their asymptotes (dashed lines) are depicted in
          red for $B_1$ and green for $B_2$.
} \label{fig:blreduce}
\end{figure}

Now that we have the tools to construct and reduce 0-value point
sequences for cells with bilinear interpolants, let us proceed in
the same way as we did for the barycentric case, classifying all
cell colorings in terms of the types of critical points they (may)
yield inside the cell.

\subsection{Classification}

\begin{figure}
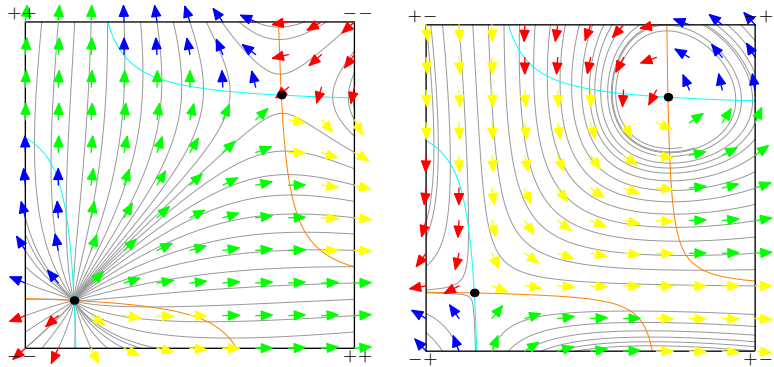

\centering
	
	\begin{tabular}{ll}
		\includegraphics[width=.4\linewidth]{rectcases.39}&
		\includegraphics[width=.4\linewidth]{rectcases.40}
	\end{tabular}

\caption{Two example visualizations of vector fields for which the $0$-value sequence on the cell boundary cannot be reduced, yielding two critical points inside the cell.}
\label{fig:rectdouble}
\end{figure}

\begin{theorem}
 \label{th:classificationbilinear}
Every configuration of a cell $Q$ (as a coloring of the cell) that
yields an intersection or touching of the level sets of the two
components inside of $Q$ is topologically equivalent to one of 74
representative configurations, see (Online Resource 2) for a
complete list:
\begin{itemize}
  \item 38 configurations have exactly one first-order critical point,
        the index of which is $\pm1$ and which is determined by the sequence
        of characteristic areas on the boundary of the cell.
  \item 4 configurations have two critical points, one of which is a saddle
        (i.e.\ index $-$1) and the other a non-saddle (i.e.\ index $+$1).
        Both critical points are stable. See Figure~\ref{fig:rectdouble} for visualizations of two sample cases.
  \item 32 configurations are value-dependent, i.e.\ they
        have either no, one, or two critical
        points. The case of a single critical point is an unstable
        bifurcation point of the underlying dynamical system.
\end{itemize}
Colorings that are not included in this list cannot have a critical
point inside the cell.
\end{theorem}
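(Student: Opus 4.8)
The plan is to mirror the barycentric classification of Theorem~\ref{th:classificationbarycentric} exactly, but carried out over the richer \emph{double-edge} data structure so that the asymptotic-decider ambiguity of saddle cells is resolved at the level of the combinatorial objects. Concretely, I would let $\mathcal{D}$ denote the set of all double-edge configurations of a cell $Q$ (four slots on each of the two double edges, filled with $0$-value points of the two components), and define an action of the coloring group $G_c=G_s\times G_f$ on $\mathcal{D}$, where now the shape group $G_s$ is the subgroup of the symmetry group of the combinatoric square that preserves the double-edge pairing (rotations together with the edge-linking induced by the asymptotes), and $G_f$ is again the sign-flip group interchanging colors $\xi_2\leftrightarrow\xi_3$, $\xi_4\leftrightarrow\xi_5$, etc.\ as in Section~\ref{sec:invariantoperationsareasequence}. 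As in the barycentric proof, I would compute the orbits of this action in GAP, discard invalidly colored orbits (those with an odd or insufficient number of active edges per component, or those not realizable as a vertex sign configuration), and then examine one representative per surviving orbit.

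Next, for each valid representative I would extract its $0$-value point sequence on the cell boundary and \emph{reduce} it using the two collapsing rules established in the preceding subsections: the bounding-box lemma (Lemma~\ref{lem:boundingbox}) rules out intersection of branches confined to disjoint boxes, and Theorem~\ref{th:intersectionshyperbolas} forbids a branch from touching or crossing itself or a double-touching. A representative then falls into exactly one of three buckets according to its fully reduced sequence. If the reduced sequence is of the form $(ab)^n$, the crossing count is combinatorially determined and each crossing is a first-order critical point whose index ($\pm1$) is read off from the turning behavior of its area sequence exactly as in the barycentric Theorem~\ref{th:classificationbarycentric}; summing over crossings gives either one first-order point (the $38$ single-critical-point orbits) or a non-reducible two-crossing configuration. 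For the latter, the remark in Section~\ref{sec:bilinear} (two critical points inside a cell must carry opposite indices by the summation theorem for the Poincar\'e index, since a cell index of $\pm2$ is impossible) shows each such orbit contains one saddle and one non-saddle, and Figure~\ref{fig:rectdouble} together with the $\Delta>0$ stability discussion confirms both points are stable, yielding the $4$ two-critical-point orbits. If the reduced sequence still contains a subsequence $aa$ or $bb$, the orbit is value-dependent: by the same two tools, such a subsequence \emph{can} be perturbed to produce no, one (the $\Delta=0$ bifurcation point), or two intersections, giving the $32$ value-dependent orbits.

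Finally I would verify that the asymptotic decider is irrelevant to the \emph{type} of critical point, as claimed in Section~\ref{sec:saddlecases}: although distinct double-edge configurations projecting under $\pi_1$ to the same edge coloring correspond to different linkings of opposite edges, each resulting intersection point is still a first-order critical point of the planar interpolant and its index depends only on the local area sequence, not on which branch-pairing was chosen. This reduces the saddle-cell analysis to the same area-sequence bookkeeping used in the non-saddle case. The main obstacle I anticipate is twofold: first, correctly specifying the group action on $\mathcal{D}$ so that the double-edge pairing (and hence the asymptotic-decider information) is respected, since an overly coarse or overly fine symmetry group would either merge genuinely inequivalent saddle configurations or fail to collapse equivalent ones; and second, establishing completeness of the collapsing rules --- that is, proving that the bounding-box lemma and Theorem~\ref{th:intersectionshyperbolas} together account for \emph{every} reducible subsequence, so that every remaining $aa$ or $bb$ in a fully reduced sequence is genuinely value-dependent and not an overlooked forced (non-)intersection. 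Once these two points are pinned down, the count $38+4+32=74$ follows mechanically from the GAP orbit enumeration, and the claim that excluded colorings contain no critical point follows because their reduced sequences contain no crossing and admit no perturbation producing one.
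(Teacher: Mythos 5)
Your proposal follows the same overall strategy as the paper --- GAP orbit enumeration under $G_c=G_s\times G_f$, elimination of invalid colorings at the representative level, reduction of the $0$-value sequence via the bounding-box lemma (Lemma~\ref{lem:boundingbox}) and Theorem~\ref{th:intersectionshyperbolas}, index determination by turning behavior, opposite indices in the two-crossing case via index summation, and a final check that the asymptotic decider does not affect the outcome --- but you invert the paper's order of operations in one substantive way. The paper takes the \emph{edge colorings} (4-tuples over the 13 colors) as the base set of the action, with $G_s=C_4\le D_8$ acting by rotation exactly as in Theorem~\ref{th:classificationbarycentric}, and only \emph{within} the saddle-cell orbits does it enumerate all double-edge configurations lying over a given coloring, verifying that every such configuration yields the same number and types of critical points (this is precisely the claim of Section~\ref{sec:saddlecases}). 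You instead act on the set $\mathcal{D}$ of double-edge configurations from the start. Both routes are sound, and yours has the advantage that the asymptotic-decider ambiguity is resolved uniformly rather than as a special case; but it creates a counting mismatch you do not fully resolve: the theorem's number $74=38+4+32$ counts orbits of \emph{colorings}, and distinct double-edge configurations over one saddle-cell coloring generally lie in distinct orbits of your finer action, so your GAP enumeration will return more than $74$ orbits. To recover the stated count you must project via $\pi_1$ and argue that orbits of $\mathcal{D}$ lying over the same coloring orbit all receive the same classification --- which is exactly the decider-irrelevance fact you verify in your last paragraph, so the repair is available to you, but the sentence ``the count $38+4+32=74$ follows mechanically from the GAP orbit enumeration'' is not literally true for your enumeration as set up. Your two anticipated obstacles are well chosen: the paper indeed leans on Lemma~\ref{lem:boundingbox} and Theorem~\ref{th:intersectionshyperbolas} as the \emph{complete} set of collapsing rules (every configuration they do not reduce is declared value-dependent because it can be perturbed to touch or intersect), and on a correctly specified action; the paper sidesteps your first obstacle entirely by never putting a group action on $\mathcal{D}$, which is arguably the simpler design.
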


\begin{proof}
The proof is analogous to the proof of
Theorem~\ref{th:classificationbarycentric}. The corresponding GAP
program performs the following steps.

1) Create all possible cell colorings with the 13 colors and use a
group action of the coloring group $G_c=G_s\times G_f$ given by the
direct product of the shape group $G_s$ and the flip group $G_f$ on
the set of all colored cells to obtain equivalence classes of
colored cells (that contain the same number and types of critical
points). The flip group only depends on the colors and as these have
not changed it can be chosen as in the proof of
Theorem~\ref{th:classificationbarycentric}. The shape group $G_s$ is
chosen as the cyclic group $C_4$ of the rotations of a cell by
$\frac{\pi}{2}$ as a subgroup of the full symmetry group of the
square, which is the dihedral group $D_8$ of order $8$.

2) Examine the representatives of the orbits: sort out invalid
colorings as in the proof of
Theorem~\ref{th:classificationbarycentric} and determine the
intersection topology and the reduced 0-value sequence of the
0-level sets of the two components using the properties described in
the preceding sections. This yields an area sequence that can be
used to determine the number and types of critical points inside the
cell. Here, four classes of configurations are distinguished: cases
with exactly one critical point, cases with exactly two critical
points, cases that are value-dependent, and cases that yield saddle
cells for any component.

3) Determine the turning behavior of the reduced area sequence and thus
the type of critical point for configurations with exactly one
critical point. Va\-lue-de\-pen\-dent and double intersection cases need
no further treatment as the number and types of the critical points
are not fully determined by the area sequence.

4) For saddle cells: construct all possible double-edge configurations
and the reduced 0-value sequence for each configuration yielding an
area sequence.

As it turns out, for saddle cells all double-edge configurations
yield the same number and the same types of critical points such
that the asymptotic decider indeed does not influence the types of
critical points inside a cell as claimed in
Section~\ref{sec:saddlecases}. For the other cases, the number of
equivalence classes given in the theorem arise. A full list of
cases, including example diagrams, is provided in (Online Resource 2).
\end{proof}

\section{Boundary Points and Closed Streamlines}
 \label{sec:boundarypoints}

So far the issue of critical points on cell boundaries has not been
addressed and is somewhat delicate because the interpolation scheme
is only $C^0$ across cell boundaries for both the barycentric and
the bilinear cases.

To resolve this issue, we employ a cell clustering as described by
Tricoche et al.~\cite{tricoche00tsm}: the current cell with a
critical point at the cell boundary and its neighboring cell sharing
the edge with the critical point are clustered into a {\em
super cell}, see Fig.~\ref{fig:bdpoints}. This process is iterated as long as there are critical points on the cell (or super cell) boundary. Then, a new (artificial) critical point $C'$ is positioned inside the super cell at a position given by the averaged positions of the critical points inside the super cell. This new critical point $C'$ can be of arbitrary order and complexity.
In order to determine the topology of the vector field inside the super cell, the super cell is triangulated with a new inner vertex at the position of $C'$. Subsequently, a piecewise linear vector field is constructed inside the super cell as the union of the barycentrically interpolated vector fields on the triangles, using the information from the boundary of the super cell. As the direction of the new vector field does not change on rays emerging from $C'$, hyperbolic sectors around $C'$ can be identified by looking at the configuration of the vector field on the boundary of the super cell. After identification of the boundaries of hyperbolic sectors, the separatrices can be traced beginning from the super cell boundary in the usual way. Figure~\ref{fig:sectorssuper} shows how the different sector types around the critical point $C'$ can be determined by using a sequence of orientations of the vector field in relation to the position vector on the boundary of the super cell with respect to the origin $C'$. We refer to
\cite{tricoche00tsm} for details of the method. This is a local
method that does not alter the vector field behavior outside the
super cell; thus, it is a robust and simple way to deal with critical
points on cell boundaries. Note that the data of the randomly generated test vector fields as shown in Table~\ref{tab:dataresults} suggest that boundary critical points occur extremely seldom such that the additional steps needed by the algorithm in that case are negligible regarding its speed.

Another problem not addressed so far is the task of finding limit cycles of vector fields. Since this problem has been addressed in various publications already (see \cite{wischgoll01dvcs,theisel04gidcsl}) and the methods presented there can be incorporated in our algorithm, we will not deal with this case here.

\begin{figure}
\centering
	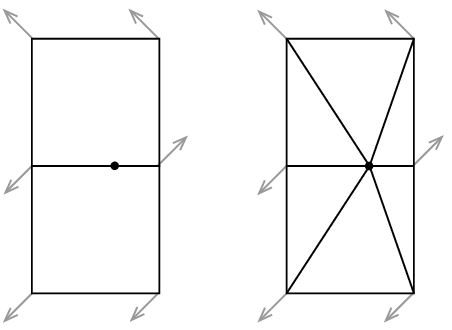
	\caption{Forming a super cell consisting of the quadrangular cells $Q_1$ and $Q_2$ for a case where a critical point $C$ lies on the cell boundary of $Q_1$ and $Q_2$. The super cell is then triangulated by $T_1,\dots,T_6$ and the vector field (shown in gray) is interpolated barycentrically inside each triangle $T_i$.}
	\label{fig:bdpoints}
\end{figure}

\begin{figure}
\centering
	\includegraphics[width=0.5\linewidth]{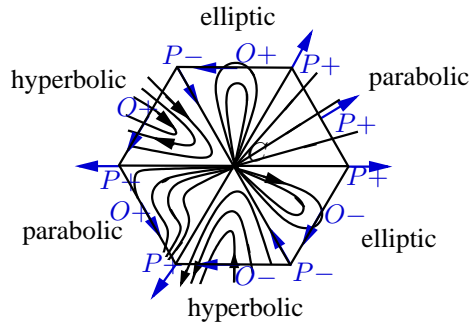}
	\caption{Possible configurations of the orientation of the vector field in relation to the position vector with respect to the origin $C'$. $P$ denotes a parallel configuration, $O$ an orthogonal one. The appended signs $+$ and $-$ distinguish the two possible directions of the vector field w.r.t. the position vector in each case.}
	\label{fig:sectorssuper}
\end{figure}

\section{Algorithm}
 \label{sec:algorithm}

The algorithm consists of three main stages. First, all cells that
may contain critical points are identified. Then, these cells are
examined more thoroughly to check whether they contain one or more
critical points and whether cell clustering is needed. This step is
performed by computing the coloring of the cell and using a lookup
table to determine the class of the cell coloring as defined in
Theorems~\ref{th:classificationbarycentric} and
\ref{th:classificationbilinear}. Finally, the positions of the
critical points inside the cells are computed analytically. For
non-saddles, the type can either be classified using the Jacobian at
that point or the sector-based idea described in
\cite{tricoche00tsm}, which is also used to determine the types of
higher-order critical points. Note that the exact type of
first-order non-saddles is not required to build the topological
skeleton, which is only based on trajectories originating from
hyperbolic sectors.

\subsection{Generic Algorithmic Skeleton}
 \label{sec:algorithm_skeleton}

\begin{enumerate}
\item
Traverse cells, mark cells that have at least two active edges for
each component by looking at the sign configuration of the vector
field at the vertices.

\item
Compute cell colorings of the marked cells and fetch the
configuration class for each cell in the lookup table; a perfect
hash function can be used for this. Deal with critical points on the
boundary of the cell using the cell-clustering approach as described
in Section~\ref{sec:boundarypoints}.

\item
Compute the location(s) of the critical point(s) inside the cell.
For the barycentric case, a linear system is solved. For the
bilinear case, one can write the interpolant for each component in
normal form $B_i(s,t)=a_is+b_it+c_ist+d_i$ and combine the two
equations into the linear or quadratic intersection equation in $s$
or $t$. Then the discriminant $\Delta$ of the intersection equation
and its solutions are computed. For both cases, solutions lying
outside the cell are dropped.

\item
Construct the topological skeleton of the vector field in the usual
way tracing streamlines at the boundary of hyperbolic sectors and
connecting these to other critical points or boundary points when
they get close. The boundaries of hyperbolic sectors can either be
found as described by Tricoche et al.~\cite{tricoche00tsm} or by
using the eigenvectors of the Jacobian at the critical point. Our
implementation employs an adaptive explicit Runge-Kutta method 
for streamline tracing.
\end{enumerate}

\subsection{Performance and Topology Simplification}
\label{sec:performance}

We propose that, in order to overcome numeric problems in the
vicinity of critical points, streamlines are only traced from and to
boundaries of cells that contain critical points. Within these
cells, streamlines are approximated by a straight line connecting
the intersection point of the trajectory and the cell boundary with
the critical point. This approach has the advantage that typically
many small steps of an explicit solver in the vicinity of the
critical point can be saved, thus speeding up the calculation of the
topological skeleton.

Further simplifications leading to speed-ups can be considered.
First, for small cells, the location of the critical point can be
reasonably well approximated by placing the critical point at an
arbitrary position---say, the center---of the cell. Second, the
topology of the vector field for cells with two critical points can
be simplified by replacing the two first-order critical points by a
single artificial second-order critical point.\\The position of the
second-order critical point can be approximated by the midpoint of
the straight line connecting the two first-order critical points.

\subsection{Numerical Stability}

To be numerically stable an algorithm has to yield consistent
results for the same input data, no matter how the input data is
ordered. This is especially important for floating-point data, where
the result of interpolation may depend on the interpolation
direction and order of instructions. For the algorithm in
Section~\ref{sec:algorithm_skeleton}, this means that no matter how
a cell is oriented in terms of the geometric location of its
vertices, the result of the calculations has to be invariant for the
two cells.  This can be achieved by choosing a unique interpolation
direction along cell edges such that this choice always yields the
same sequence of calculations on the floating-point vector field
data. Then, by virtue of consistency of IEEE floating-point
arithmetic~\cite{ieee85sbf}, the interpolation result and the
corresponding vertex and edge classifications have to be identical.

Consider a cell with vertices $P_1,\dots,P_n$, $P_i(x_i|y_i)$\\
($i=1\dots n$), where $x_i$ and $y_i$ are floating-point numbers.\\
Then, without loss of generality one can choose the interpolation
direction along an edge between two vertices to always point from
the vertex with smaller $y$-coordinate to the vertex with greater
$y$-coordinate. If the two $y$-coordinates agree, then the same
argument can be applied to the $x$-co\-or\-di\-na\-tes of the vertices. This
method always yields a unique interpolation direction as: (1) the
inequality comparison operator for IEEE floating-point numbers is
consistent in the sense that if $f_1$ and $f_2$ are two
floating-point numbers, $f_1>f_2\Ra f_2<f_1$; (2) the equality
comparison operator for IEEE floating-point numbers is commutative,
i.e.\\$f_1=f_2\Ra f_2=f_1$.

Our algorithm and the classification of the critical points depend
on a consistent orientation of traversing the boundary of a cell.
This orientation can be defined by the sign of the oriented area
of a cell. For a
triangle with vertices $P_1(x_1|y_1)$, $P_2(x_2|y_2)$,
$P_3(x_3|y_3)$, the oriented area is
$$A^{*}=\frac{1}{2}(x_2-x_1)(y_3-y_1)-(y_2-y_1)(x_3-x_1).$$ For the
case of quadrilateral cells, a cell is split into two triangles
and the oriented area sequence of one of the two triangles can be
used.

\section{Results}
 \label{sec:results}

\begin{figure}[h]
\includegraphics[width=0.49\linewidth]{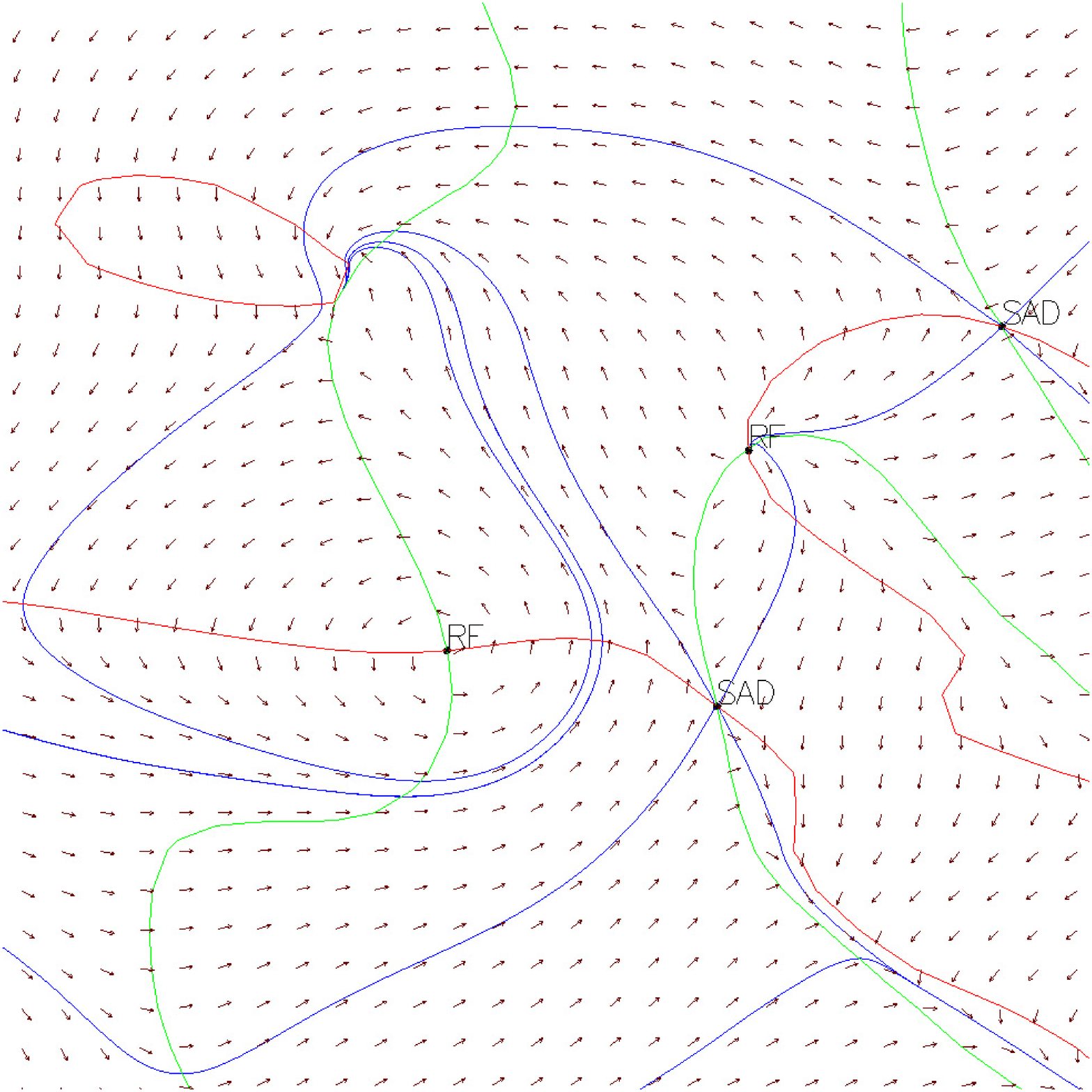}
\hfill
\includegraphics[width=0.49\linewidth]{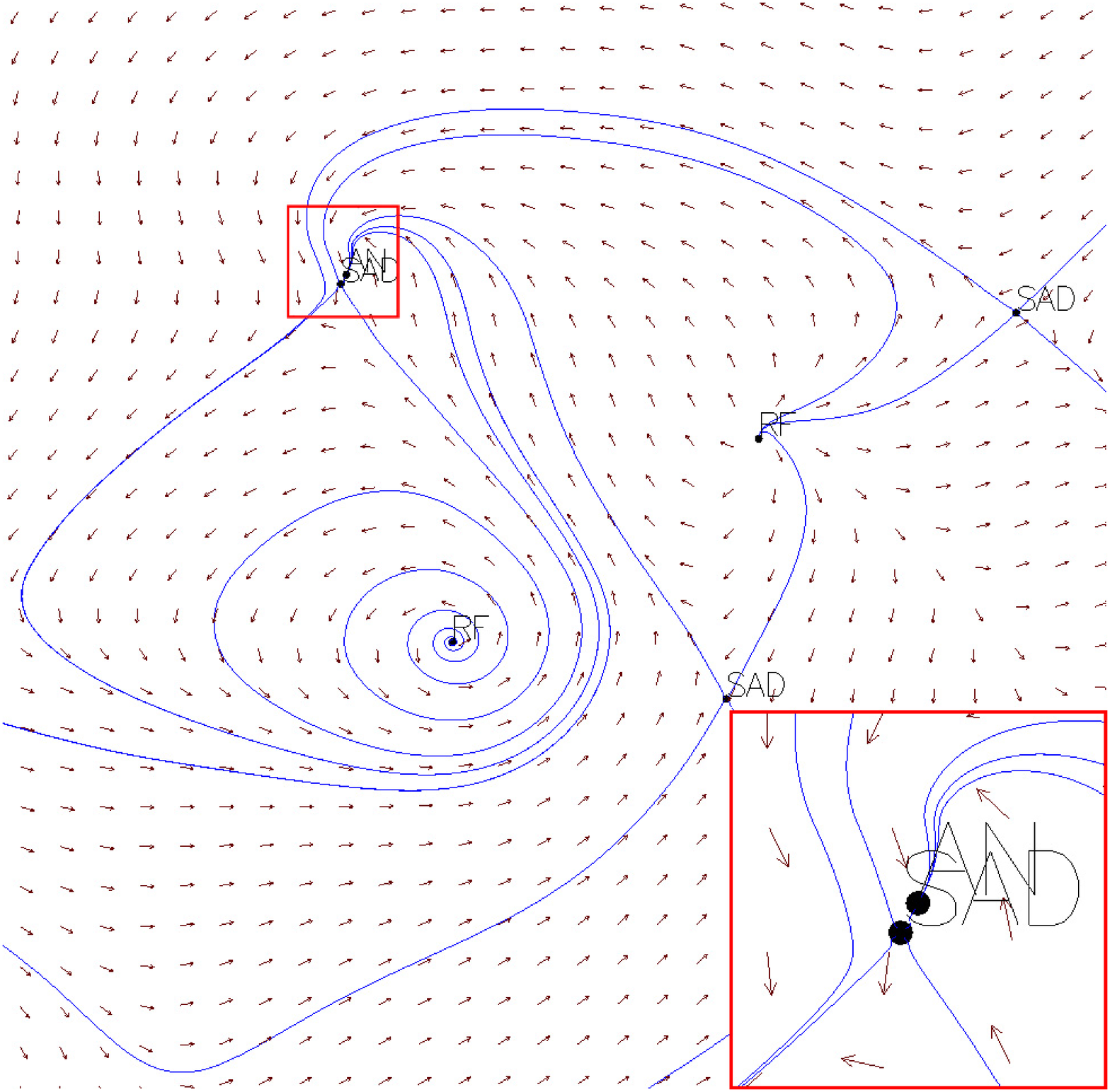}\\
\mbox{}\hfill(a) \hfill\hfill (b) \hfill\mbox{}
 \caption{Vector field
topology of the oceanic flow data set. Black dots indicate detected
critical points, blue lines show the topological skeleton,
annotations provide the classification (AN: attracting node, RF:
repelling focus, SAD: saddle). (a) Incorrect topology obtained by a
method that intersects linearized 0-level sets of the $x$-component
(red) and the $y$-component (green). Notice how trajectories
terminate in critical points that are not detected as such in the
top left of the image. (b) Correct topology obtained with our
method. The pair of critical points inside a value-dependent cell is
correctly identified (framed area shown magnified in the bottom
right).} \label{fig:topologyocean}
\end{figure}

\begin{figure}[h]
\centering
\includegraphics[width=0.7\linewidth]{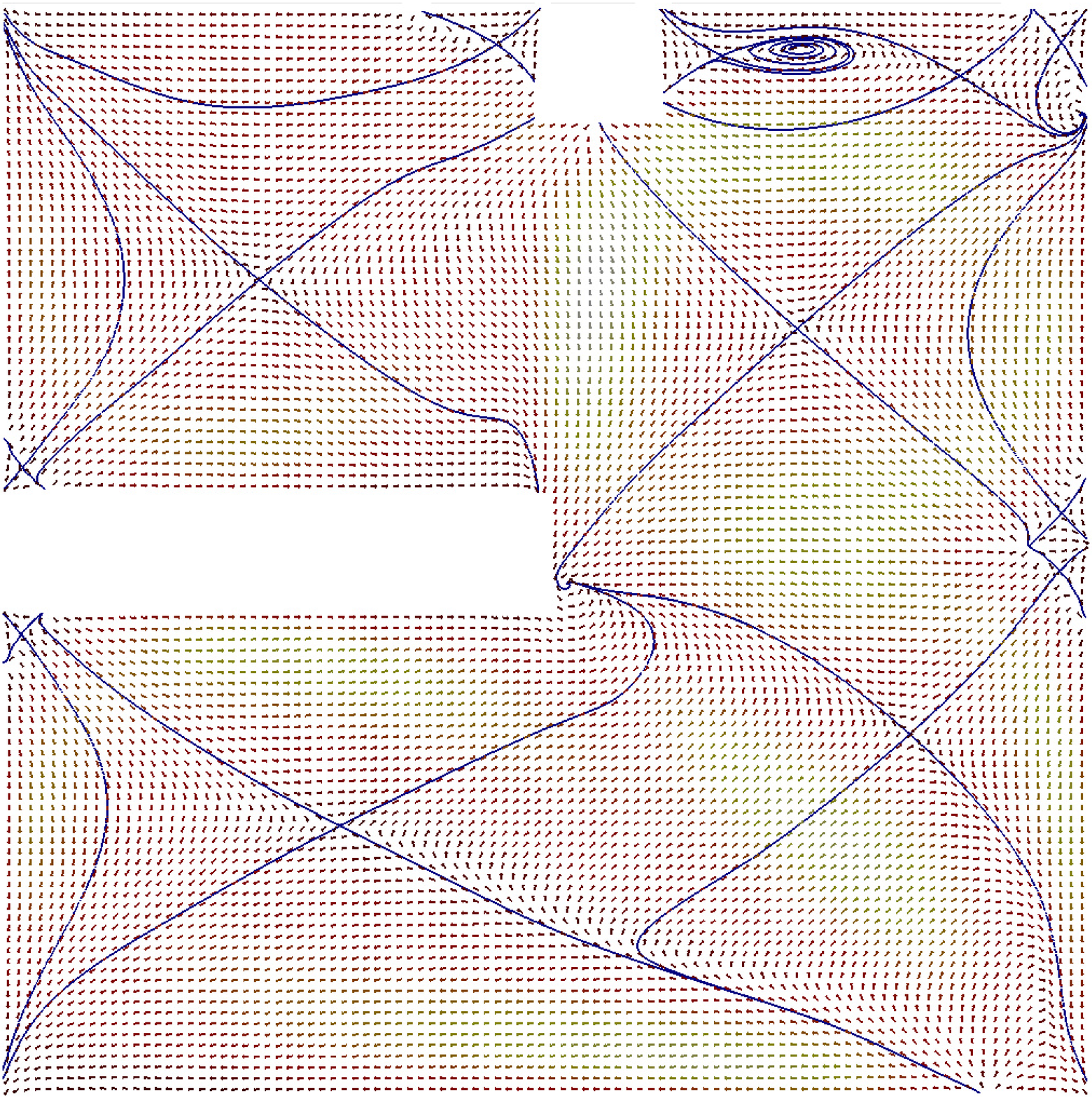}
\caption{Vector field topology of the air flow data set calculated with our algorithm. Separatrices are shown in blue along with a glyph-based visualization of the vector field, where the arrows encode the vector direction and the vector norm is color-coded.}
\label{fig:topologyair}
\end{figure}

\begin{figure}[h]
\centering
\begin{tabular}{cc}
\includegraphics[width=.45\linewidth]{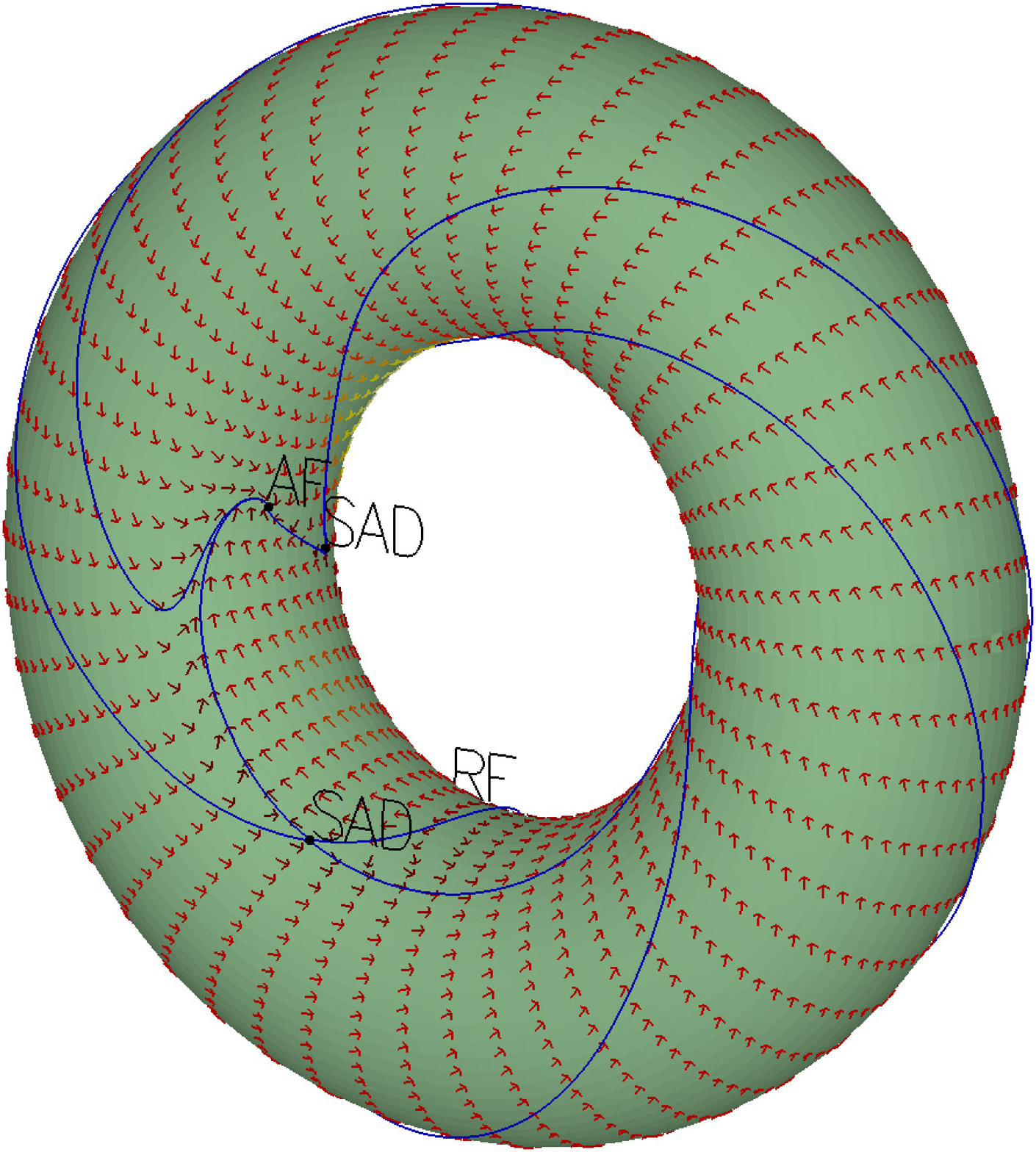}&\includegraphics[width=.45\linewidth]{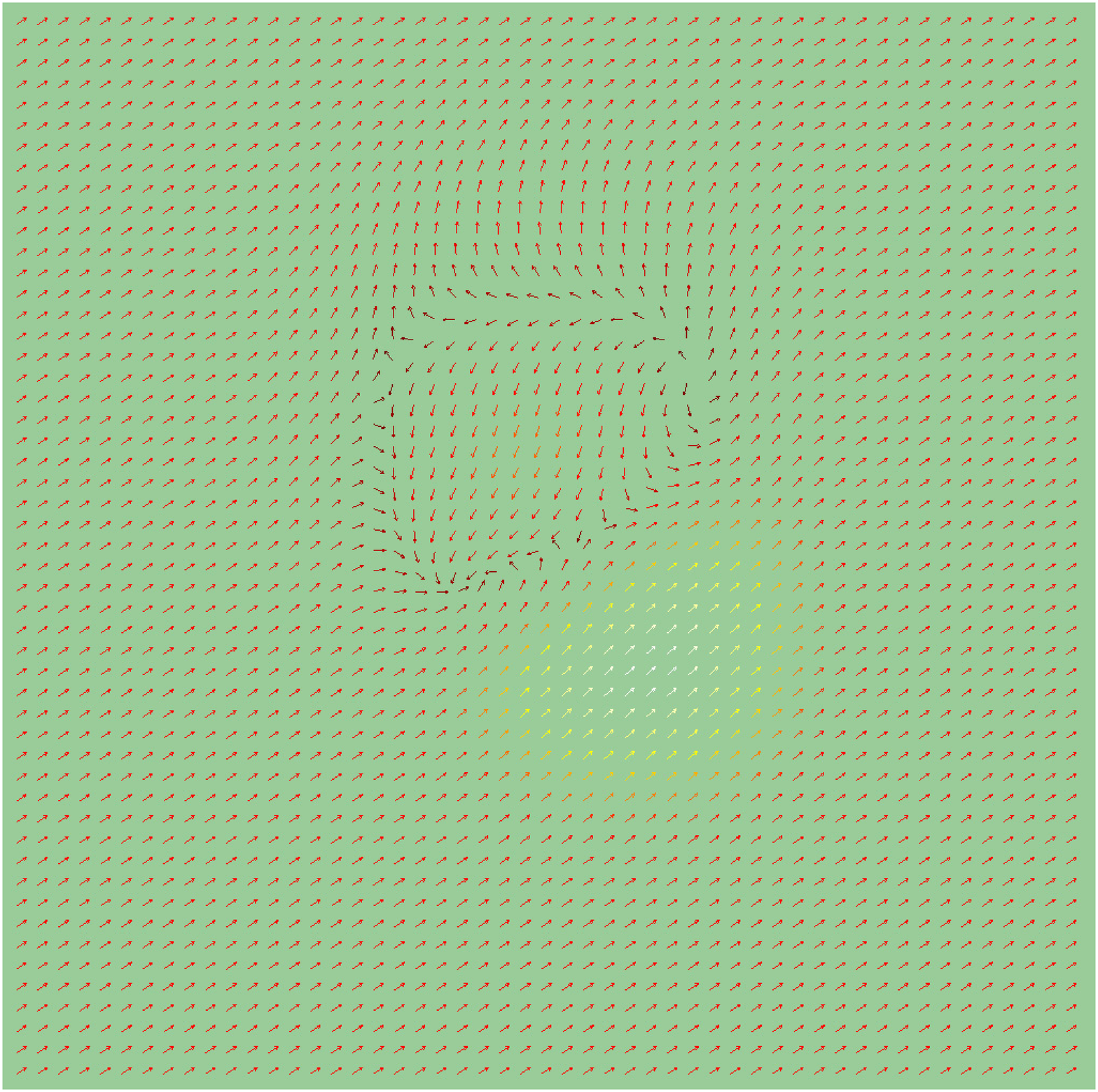}
\end{tabular}
\caption{Example of a vector field on a torus. Left: torus immersed in Euclidean 3-space, right: torus in parameter space. The top and the bottom as well as the left and the right border are identified, respectively. Again, the vector field is also visualized via glyphs.}
\label{fig:torusvf}
\end{figure}
 
We illustrate the results of our algorithm for four test cases. The
first test case consists of randomly generated bilinearly
interpolated vector fields: for each cell vertex, the two vector
field components are random numbers lying in
$[\text{-1,1}]\backslash\{\text{0}\}$, quantized to
$\text{10}^\text{-6}$. Machine accuracy is set to
$\epsilon=\text{10}^\text{-9}$ and the threshold of a critical point
being of second-order is chosen to be $\Delta < \text{0.05}$, where
$\Delta$ is the discriminant of the intersection equation as
described in Section~\ref{sec:blinterpolant}. We have generated and
examined $10^8$ cases. The second test case is a CFD dataset of an
oceanic flow in the Baltic Sea (simulation courtesy of Kurt
Frischmuth, University of Rostock), see Figure~\ref{fig:topologyocean}. The data set is given on a uniform grid of size 100\;$\times$\;112, and the 
vector field is rotated by 90 degrees, c.f. Theisel et al.~\cite{theisel03c3v}.

The third data set examined with the algorithm is a planar slice of a simulated unsteady flow of air in a hot room at a fixed time step of the simulation (data courtesy of Filip Sadlo, Universität Stuttgart). See Figure~\ref{fig:topologyair} for an illustration of the flow topology extracted with our algorithm. Again, the data set is given on a uniform grid, this time of size 100\;$\times$\;100.

As the visualization of flows on surfaces has recently become a focus of research in the field of topology-based vector field visualization \cite{laramee09btbfva}, we chose as a fourth test case a tangential vector field on a torus given on a uniform grid of size 50\;$\times$\;50 in parameter space and computed its topology with our algorithm.

\begin{landscape}
	
\begin{table*}
\centering
 \caption{Occurrences of critical points in test data sets. Relative occurrences
 are rounded to three decimal places.}
 \label{tab:dataresults}
 \centering
 \begin{tabular}{@{}|l@{}l|r|r|r|r|r|r|r|r@{}l|@{}}
 \hline
 &Class&\multicolumn{2}{c}{Random field}&\multicolumn{2}{c}{Ocean flow}&\multicolumn{2}{c}{Air flow}&\multicolumn{2}{c}{Torus}&\\
 &&absolute &relative & absolute & relative &absolute & relative &absolute & relative &\\\hline\hline
 &\multicolumn{9}{c}{Overall}&\\
 \hline\hline
 &Topology-dependent & 62887918 & 0.629 & 11089 & 0.990 &9972&0.997&2492&0.997&\\\hline
 &Value-dependent    & 37112082 & 0.371 &  111  & 0.010 &28&0.003&8&0.003&\\\hline
 &Boundary point &0    & 0.000 &0      & 0.000 &0&0.000&0&0.000&\\\hline\hline
 &\multicolumn{9}{c}{Number of critical points} &\\\hline
 \hline
 &None                & 64094290 & 0.641 & 11141 & 0.995 &9984&0.998&2496&0.998&\\\hline
 &At least one        & 35905710 & 0.359 & 59    & 0.005 &16&0.002&4&0.002&\\\hline
 &Exactly one         & 33202155 & 0.332 & 57    & 0.005 &16&0.002&4&0.002&\\\hline
 &Exactly two         & 2087768  & 0.021 & 2     & 0.000 &0&0.000&0&0.000&\\\hline\hline
 &\multicolumn{9}{c}{Topology-dependent cases} &\\\hline
 \hline
 &No critical point    & 29685763  & 0.472 & 11032 & 0.985 &9956&0.996&2488&0.995&\\\hline
 &Saddle               &16600916   & 0.264 & 33    & 0.003 &4&0.000&2&0.001&\\\hline
 &Non-saddle           &16601239 & 0.264 & 24  & 0.002 &12&0.001&2&0.001&\\\hline
 &Saddle \& non-saddle & 0          & 0.000  & 0   & 0.000 &0&0.000&\\\hline\hline
 &\multicolumn{9}{c}{Value-dependent cases} &\\\hline\hline
 &No critical point    & 34408527   & 0.927 & 109  & 0.010 &28&0.003&8&0.003&\\\hline
 &Saddle \& non-saddle & 2087768    & 0.056 & 2    & 0.000 &0&0.000&0&0.000&\\\hline
 &Higher-order         & 615787     & 0.017 & 0    & 0.000 &0&0.000&0&0.000&\\\hline
\end{tabular}
\end{table*}

\end{landscape}

Table~\ref{tab:dataresults} documents the statistics with regard to
number and types of critical points for all test cases. One
interesting observation is that, although higher-order critical
points might occur, their frequency is very low (of course depending on the threshold value for $\Delta$). On the other hand,
two critical points within a cell are (at least in the case of the oceanic flow) more common  and need to be considered in practice. The most interesting finding is that the realistic ocean and air flow data sets contain only very few value-dependent cases (less than 1 percent). Therefore, our algorithm can in both cases identify
and classify more than 99 percent of the critical points by just
analyzing the topology of 0-value points on the cell
boundary---without the need to evaluate the Jacobian, solve a
quadratic system, or apply the sector method~\cite{tricoche00tsm}. Critical points on the boundary did not occur in this data series such that one can assume that these occur extremely seldom. Even if the quantization of the values for the random vector field test case is decreased to $10^{-4}$, only $46$ of the $10^8$ generated cases, i.e. $0.000046\%$, had critical points on cell boundaries. Generally speaking: The coarser the quantization or numeric accuracy of the computations the likelier the occurrence of critical points on cell boundaries becomes. None the less we believe their occurrence to be very seldom in practice.

Figure~\ref{fig:topologyocean} shows qualitative results for the
ocean data set, illustrating the difference between a linearized
version of the vector field and the original bilinear version:
Fig.~\ref{fig:topologyocean}(a) shows the (incorrect) topological
skeleton obtained from the linearized vector field and
Fig.~\ref{fig:topologyocean}(b) shows the correct version produced
by our method. The differences arise for va\-lue-de\-pen\-dent cells that
contain two critical points missed when linearizing the 0-isolines
of the vector field's components.

In terms of flows on surfaces an example in form of a vector field on the torus is examined. The standard $2$-torus $T^2$ can be parametrized via
\begin{equation*}
f(u,v)=\left((a+b\cos u)\cos v,(a+b\cos u)\sin v,b\sin u\right),
\end{equation*}
with $u,v,a,b\in \R$, $0\leq u,v<2\pi$, $0<b<a$.
Using this pa\-ra\-me\-tri\-za\-tion and its Jacobian $D{f}$ one can project two-di\-men\-sio\-nal vector fields defined in parameter space $[0,2\pi)^2\subset\R[2]$ and their topology obtained with our algorithm to the torus, see Figure~\ref{fig:torusvf}.

\section{Conclusions and Future Work}
 \label{sec:conclusion}

We have presented a novel approach to finding and classifying
critical points according to their Poincar\'{e} index for
barycentrically and bilinearly interpolated vector fields on triangular and rectilinear grids in parameter space, respectively. Our
approach is cell-based and efficient through the use of lookup
tables reminiscent of the classification of scalar fields by the
marching cubes algorithm. Our algorithm is able to deal with
critical points on cell boundaries, to detect second-order critical
points for bilinearly interpolated vector fields (together with
providing a measure for the stability of such critical points), and
to simplify vector field topology inside a cell by substituting two
first-order critical points with one of second order. We have
demonstrated that, for practical applications, the number of
critical points and their Poincar\'{e} indices can be identified by
just examining the intersection topology of the 0-level sets of the
interpolants with the cell boundaries. We have put our algorithm on
a sound mathematical foundation and showed how group theoretic tools
and a combinatoric description via cell colorings can be used to
solve the problem of critical-point classification, which may not
seem to be of a combinatoric nature at first glance.

As already pointed out in Section~\ref{sec:results}, the topology-based visualization of flow on surfaces has become a prominent field of research during the last years (c.f. \cite{laramee09btbfva}) and it thus would be of interest to be able to apply the algorithm presented in this paper to vector fields on triangulations and quadrangulations of arbitrary surfaces. A simple proof of concept was given in the case of the torus in Section~\ref{sec:results}. Note that in the quadrangular case the algorithm is well suited to be used in combination with the surface parametrization algorithm QuadCover by Kälberer et al. \cite{kalberer07qc} due to its structure and it thus seems a natural step to combine these two methods---this is research in progress.

More fundamental open questions for future work include: can this
method be extended to higher dimensional domains or to other
interpolants such as tensor-product cubic? More generically, it
might be of interest for other areas of visualization to see if they
might benefit from combinatoric and group-theoretic approaches.

\bigskip

Felix Effenberger\\
              Universität Stuttgart\\
					Institut für Geometrie und Topologie\\
              \url{effenberger@mathematik.uni-stuttgart.de}
              \\[1ex]
         Daniel Weiskopf \\
				Universität Stuttgart\\
				Visualization Research Center (VISUS)\\
           \url{weiskopf@vis.uni-stuttgart.de}

\bibliographystyle{alpha}
\bibliography{bibliography}   

\end{document}